\documentclass[a4paper,11pt]{article}
\usepackage{a4wide}
\usepackage{microtype}
\usepackage[hidelinks=true]{hyperref}
\usepackage{xcolor}
\usepackage{amsmath}
\usepackage{amssymb}
\usepackage{xspace}
\usepackage{complexity}
\usepackage[boxruled]{algorithm2e}
\usepackage{times}
\usepackage{amsthm}

\theoremstyle{plain}
\newtheorem{theorem}{Theorem}
\newtheorem{lemma}{Lemma}

\theoremstyle{remark}

\newtheorem{claim}{Claim}[lemma]

\theoremstyle{definition}
\newtheorem{definition}{Definition}

\newenvironment{subproof}{%
    \begin{proof}[Proof of Claim]%
    }{%
    \end{proof}%
}

\usepackage{enumitem}
\setlist{nolistsep, noitemsep, topsep=0pt}

\setcounter{secnumdepth}{2}

\usepackage{tikz}
\usetikzlibrary{calc}
\usetikzlibrary{decorations.markings}
\usetikzlibrary{shapes.geometric,positioning}
\usetikzlibrary{arrows}
\usetikzlibrary{patterns}
\tikzset{node/.style={minimum size=1.8mm,circle,fill=black,draw,inner sep=0pt},
    decoration={markings,mark=at position .5 with {\arrow[black,thick]{stealth}}}}
\tikzset{req/.style={minimum size=1.8mm,circle,fill=white,draw,inner sep=0pt},
    decoration={markings,mark=at position .5 with {\arrow[black,thick]{stealth}}}}

\usepackage[noblocks]{authblk}
\title{A QPTAS for Gapless MEC\footnote{Research partially funded by Deutsche Forschungsgemeinschaft grant MO2889/1-1.}}
\author{Shilpa Garg}
\affil{Max Planck Institute for Informatics, Saarland Informatics Campus, Germany. \texttt{sgarg@mpi-inf.mpg.de}}
\author{Tobias M\"omke}
\affil{Saarland University, Saarland Informatics Campus, Germany and University of Bremen, Germany. \texttt{moemke@cs.uni-saarland.de}}

\newcommand{\ie}{i.e.}

\newcommand{\WLOG}{w.l.o.g.\xspace}
\newcommand{\cost}{\ensuremath{\mathrm{cost}}\xspace}
\newcommand{\MEC}{\textsc{MEC}\xspace}
\newcommand{\GMEC}{\textsc{Gapless-MEC}\xspace}
\newcommand{\BMEC}{\textsc{Binary-MEC}\xspace}

\newcommand{\dist}{\ensuremath{\mathrm{dist}}\xspace}

\newcommand{\euler}{\textrm{e}}
\newcommand{\SWC}{\textsc{SWC}_{\varepsilon^3}}

\newclass{\opt}{opt}
\newclass{\reference}{ref}
\newclass{\UG}{UG}
\newcommand{\st}{\textrm{start}}
\newcommand{\en}{\textrm{end}}

\usepackage{xcolor}

\usepackage{xspace}
\usepackage{complexity}
\usepackage[boxruled]{algorithm2e}
\usepackage[utf8]{inputenc}

\usepackage{tikz}
\usetikzlibrary{calc}
\usetikzlibrary{decorations.markings}
\usetikzlibrary{shapes.geometric,positioning}
\usetikzlibrary{arrows}
\usetikzlibrary{patterns}
\tikzset{node/.style={minimum size=1.8mm,circle,fill=black,draw,inner sep=0pt},
    decoration={markings,mark=at position .5 with {\arrow[black,thick]{stealth}}}}
\tikzset{req/.style={minimum size=1.8mm,circle,fill=white,draw,inner sep=0pt},
    decoration={markings,mark=at position .5 with {\arrow[black,thick]{stealth}}}}

\usepackage{cleveref}
\begin{document}
\maketitle

\begin{abstract}
    We consider the problem Minimum Error Correction (\MEC).
    A \MEC instance is an $n \times m$ matrix $M$ with entries from $\{0,1,-\}$. 
    Feasible solutions are composed of two binary $m$-bit strings, together with an assignment of each row of $M$ to one of the two strings.
    The objective is to minimize the number of mismatches (errors) where the row has a value that differs from the assigned solution string.
    The symbol ``$-$'' is a wildcard that matches both $0$ and $1$.
    A \MEC instance is gapless, if in each row of $M$ all binary entries are consecutive.

    \GMEC is a relevant problem in computational biology, and it is closely related to segmentation problems that were introduced by {[}Kleinberg--Papadimitriou--Raghavan STOC'98{]} in the context of data mining.

    Without restrictions, it is known to be $\UG$-hard to compute an $O(1)$-approximate solution to \MEC. For both \MEC and \GMEC, the best polynomial time approximation algorithm has a logarithmic performance guarantee.
    We partially settle the approximation status of \GMEC by providing a quasi-polynomial time approximation scheme (QPTAS).
    Additionally, for the relevant case where the binary part of a row is not contained in the binary part of another row, we provide a polynomial time approximation scheme (PTAS).
\end{abstract}

\section{Introduction.}
The minimum error correction problem (\MEC) is a segmentation problem where we have to partition a set of length $m$ strings into two classes.
A \MEC instance is given by a set of $n$ strings over $\{0,1,-\}$ of length $m$, where the symbol ``$-$'' is a wildcard symbol.
The strings are represented by an $n \times m$ matrix $M$, where the $i$th string determines the $i$th row $M_{i,*}$ of $M$.
The distance $\dist$ of two symbols $a,a'$ from $\{0,1,-\}$ is

\[
    \dist(a,a') := \left\{
        \begin{array}{ll}
            1\colon & a=0,a'=1 \mbox{ or } a=1, a'=0\\ 
            0\colon & \mbox{otherwise.}
        \end{array}
    \right.
\]

For two strings $s,s'$ from $\{0,1,-\}^m$ where $s_j, s'_j$ denotes the $j$-th symbol of the respective string, $\dist(s,s') := \sum_{j=1}^m \dist(s_j,s'_j)$.
A feasible solution to \MEC is a pair of two strings $\sigma,\sigma'$ from $\{0,1\}^m$.
The optimization goal is to find a feasible solution $(\sigma,\sigma')$ that minimizes

\[
    \cost_M(\sigma,\sigma') := \sum_{i=1}^{n} \min\{\dist(M_{i,*},\sigma), \dist(M_{i,*},\sigma')\}\,.
\] 

If $M$ is clear from the context, we sometimes skip the index.

A \MEC instance is called \emph{gapless} if in each of the $n$ rows of $M$, all entries from $\{0,1\}$ are consecutive. 
(As regular expression, a valid row is a word of length $m$ from the language $-^*\{0,1\}^*-^*$).
The \MEC problem restricted to gapless instances is \GMEC.

Our motivation to study \GMEC stems from its applications in computational biology.
Humans are diploid, and hence there exist two versions of each chromosome.
Determining the DNA sequences of these two chromosomal copies -- called haplotypes -- is important for many applications ranging from population history to clinical questions~\cite{snyder2015haplotype, tewhey2011importance}. 
Many important biological phenomena such as compound heterozygosity, allele-specific events like DNA methylation or gene expression can only be studied when haplotype-resolved genomes are available~\cite{leung2015integrative}. 

Existing sequencing technologies cannot read a chromosome from start to end, but instead deliver small pieces of the sequences (called reads). 
Like in a jigsaw puzzle, the underlying genome sequences are reconstructed from the reads by finding the overlaps between them.

The upcoming next-generation sequencing technologies (e.g., Pacific Biosciences) have made the production of relatively long contiguous sequences with sequencing errors feasible, where the sequences come from both copies of chromosome.
These sequences are aligned to a reference genome or to a structure called contig. 
We can formulate the result of this process as a \GMEC instance: the sequences are the contiguous strings and the contig determines the columns of the strings.

\GMEC is a generalization of a problem called \BMEC, the version of \MEC with only instances $M$ where all entries of $M$ are in $\{0,1\}$.
Finding an optimal solution to \BMEC is equivalent to solving the hypercube 2-segmentation problem (H2S) which was introduced by Kleinberg, Papadimitriou, and Raghavan~\cite{KPR98_segmentation,KPR04_segmentation} and which is known to be $\NP$-hard \cite{Fei14_np,KPR04_segmentation}.
The optimization version of \BMEC  differs from H2S in that we minimize the number of mismatches instead of maximizing the number of matches.
\BMEC allows for good approximations.
Ostravsky and Rabiny~\cite{OR02_polynomial} obtained a PTAS for \BMEC based on random embeddings.
Building on the work of Li et al.~\cite{LMW02_finding}, Jiao et al.~\cite{JXL04_k} presented a deterministic PTAS for \BMEC.

\GMEC was shown to be $\NP$-hard by Cilibrasi et al.~\cite{CIKT07_complexity}.\footnote{Their result predates the hardness result of Feige~\cite{Fei14_np} for H2S. The proof of the claimed $\NP$-hardness of H2S by Kleinberg, Papadimitriou, and Raghavan~\cite{KPR98_segmentation} was never published.}
Additionally, they showed that allowing a single gap in each string renders the problem $\APX$-hard.
More recently, Bonizzoni et al.~\cite{BDK+16_minimum} showed that it is unique games hard to approximate \MEC with constant performance guarantee, whereas it is approximable within a logarithmic factor in the size of the input. 
To our knowledge, previous to our result their logarithmic factor approximation was also the best known approximation algorithm for \GMEC.

\subsection{Our results.}
Our main result is the following theorem.
\begin{theorem}\label{thm:qptas}
    There is a quasi-polynomial time approximation scheme (QPTAS) for \GMEC.
\end{theorem}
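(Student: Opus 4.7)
The plan is to design a recursive, column-based scheme that sweeps over $[1,m]$ and uses small guessed summaries to couple distant columns, exploiting the interval structure of gapless rows.

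As a first step, I would set up a column-based divide-and-conquer: split the range $[1,m]$ at a midpoint $j^*$ and recurse on the two halves. The gapless property partitions the rows naturally into three classes: rows internal to the left half ($\en_i < j^*$), rows internal to the right half ($\st_i > j^*$), and \emph{crossing} rows with $\st_i \leq j^* \leq \en_i$. Internal rows are self-contained within their subproblem, so the difficulty concentrates on the crossing rows, which must be assigned consistently across the split. Because rows are intervals, the crossing rows at any split are all active at column $j^*$, and their behavior on a neighborhood of $j^*$ already carries most of the information needed to decide their side.

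Next, I would handle crossing rows by a guess-and-sample approach at each split. Concretely, I would guess (i) the pair $(\sigma_{j^*}, \sigma'_{j^*})$ and (ii) the assignment-to-sides of a random sample of $\Theta(\mathrm{polylog}(n)/\mathrm{poly}(\varepsilon))$ crossing rows. From the sample, one derives proxies for the two cluster centers on a window of columns around $j^*$; each unsampled crossing row is then assigned to the proxy it is closer to on its interval restricted to that window. The resulting partial assignment of crossing rows is passed into both recursive calls so that the two halves make consistent decisions. The number of guesses per split is $2^{\mathrm{polylog}(n)/\mathrm{poly}(\varepsilon)}$, and since the recursion has depth $O(\log m)$, the overall running time is $n^{\mathrm{polylog}(n, 1/\varepsilon)}$, i.e., quasi-polynomial.

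The main obstacle is the cost analysis for the sampling step: we must show that, with high probability over the sample, the sample-induced assignment of unsampled crossing rows incurs an additional cost of at most $\varepsilon \cdot \opt$ compared to the optimum. This is where the $\varepsilon^3$-scale refinement (as foreshadowed by $\SWC$) should enter, since three independent sources of error must be balanced: the rounding of cost contributions into polylogarithmically many classes, the deviation of the sample from the true distribution of crossing rows at each column, and the aggregation of per-level errors across the recursion tree. The argument is a Chernoff/Hoeffding concentration bound applied column-by-column, combined with a union bound over all columns in the window around $j^*$, over all guesses at each split, and over the $O(\log m)$ recursion levels. A secondary technical point is the base case, where the column window shrinks to polylogarithmic size and a direct enumeration over the restricted $(\sigma, \sigma')$ suffices; this contributes only a polynomial factor. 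Once the concentration argument is established, combining it with the recursive decomposition and the base case yields the claimed QPTAS.
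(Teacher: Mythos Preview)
Your proposal has a real gap in the sampling step for crossing rows. At a split point $j^*$, the crossing rows can have lengths ranging over the full spectrum from $1$ to $m$, and this heterogeneity is precisely the obstacle the paper spends most of its effort on. A uniform sample of $\mathrm{polylog}(n)/\mathrm{poly}(\varepsilon)$ crossing rows will be dominated by whichever length scale is most numerous, and the proxies you derive on a ``window around $j^*$'' carry no information about columns far from that window. Concretely: if many short rows cross $j^*$ together with a few very long rows, your proxies are determined by the short rows, and the long rows may be assigned based on a window that is a vanishing fraction of their support. A single misassigned long row then incurs cost proportional to its entire length, which need not be bounded by any multiple of $\opt$ (in OPT that row may contribute zero errors). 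Your column-by-column Chernoff argument does not repair this, because the errors caused by one misassigned row are perfectly correlated across all columns of its interval; a per-column union bound does not aggregate to an $\varepsilon\cdot\opt$ total bound.

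The paper's machinery is built exactly to circumvent this. First, rows are bucketed into $O(\log m)$ length classes $\Lambda_i$, so that rows crossing a fixed column within one class differ in length by at most a factor of two. Second, within a rooted sub-instance the paper does \emph{not} sample uniformly: the trisection/subdivision structure (Definitions~\ref{def:trisection}--\ref{def:subdivision}) together with the weighted majority (Definition~\ref{def:weighted-majority}) reweights samples so that sparse long tails and dense short heads are represented comparably. Third, and crucially, the paper never irrevocably fixes the assignment of the ``tail'' rows from a local view; Lemma~\ref{lem:swc-gap} shows that one may compute $(\sigma,\sigma')$ from the $U\cup L$ part of a trisection and \emph{afterwards} reassign the rows in $X$ to their optimal side with only $(1+O(\varepsilon))$ loss. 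This re-assignability is what lets the DP stitch sub-instances together without committing to assignments based on too-local information. Your scheme commits at the split and propagates downward, which is the opposite direction; without a length-class decomposition and a reweighted sampling rule, I do not see how to bound the cost of those commitments.
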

Thus we partially settle the approximability for this problem: \GMEC is not $\APX$-hard unless $\NP \subseteq \QP$ (cf.~\cite{RS09_approximation}).
Thus our result reveals a separation of the hardness of the gapless case and the case where we allow a single gap.
Furthermore, already \BMEC is strongly $\NP$-hard since the input does not contain numerical values. 
Therefore we can exclude the existence of an FPTAS for both $\BMEC$ and $\GMEC$ unless $\P = \NP$.

Additionally, we address the class of \emph{subinterval-free} \GMEC instances where no string is contained in another string.
More precisely, for each pair of rows from $M$ we exclude that the set of columns with binary entries from one row is a strict subset of
the set of columns with binary entries from the other row.

\begin{theorem}\label{thm:ptas}
    There is a polynomial time approximation scheme (PTAS) for $\GMEC$ restricted to instances such that no string is the substring of another string.
\end{theorem}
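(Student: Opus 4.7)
The plan is to exploit the subinterval-free restriction to collapse the QPTAS of Theorem~\ref{thm:qptas} to polynomial running time. The entire approach hinges on the classical structural fact that if no row's binary interval is strictly contained in another's, then sorting the rows by $\st(\cdot)$ simultaneously sorts them by $\en(\cdot)$: otherwise, consecutive indices $i, i{+}1$ with $\st(i) \le \st(i{+}1)$ and $\en(i) > \en(i{+}1)$ would make the binary part of row $i{+}1$ a proper substring of that of row $i$. An immediate consequence is that as a sweep line moves from column $1$ to column $m$, rows enter and leave the active set (those rows whose binary interval contains the current column) in the same global FIFO order, and the active set is always a contiguous range $[a_j, b_j]$ of sorted indices with both endpoints non-decreasing in $j$.

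Building on this, I would set up a sweep-line DP whose state at column $j$ records (i) the most recent $O(1/\varepsilon)$ bits of the candidate strings $\sigma$ and $\sigma'$, (ii) the side assignment of each row in a sliding committee of $O(1/\varepsilon^2)$ representative currently-active rows, and (iii) per-column aggregate statistics summarizing the contributions of already-finalized rows. A transition advances $j$ by one, guesses the new bits $\sigma_j, \sigma'_j$, updates the committee when an active row enters or leaves, and folds any leaving row's commitments into the aggregate statistics. The FIFO structure ensures that each step adds and removes at most one row from the committee, so the total number of reachable committee evolutions is $n^{O(1/\varepsilon^2)}$ and every transition runs in polynomial time.

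The main obstacle is the approximation analysis: arguing that a carefully chosen sliding committee of $O(1/\varepsilon^2)$ active rows per column suffices to approximate the column-wise majority---and therefore the optimal bits of $\sigma$ and $\sigma'$---within an $\varepsilon$-fraction of the optimum. I would adapt the sampling/derandomization arguments of Ostrovsky--Rabani and Jiao et al.\ for \BMEC, with the added twist that the committee slides with the sweep line; the sample size and turnover rate must be balanced so that per-column errors accumulate to at most an $\varepsilon$-fraction of the optimum. Here the FIFO property is indispensable, since it bounds the committee's turnover rate and allows the aggregate statistics of finalized rows to be maintained accurately across transitions.
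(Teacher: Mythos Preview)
Your structural observation about the FIFO property is correct and is also the starting point of the paper's argument. But the proposal has a genuine gap at the step you yourself flag as ``the main obstacle'': a fixed-size sliding committee of $O(1/\varepsilon^2)$ active rows cannot, without further structure, witness the column-wise majority at every column. The \BMEC sampling arguments of Ostrovsky--Rabani and Jiao et al.\ assume all sampled rows are full-length binary; here the active set shifts as the sweep proceeds, and a committee that represents the optimal partition well at column $j$ need not represent it at a later column $j'$ where the active set is largely disjoint. Replacing committee members as rows leave does not resolve this: each replacement is a fresh guess whose side-assignment must be consistent with a single global partition $(\tau(M),\tau'(M))$, and you give no argument bounding the error accumulated across the entire sweep. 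The sentence about ``balancing sample size and turnover rate'' names the difficulty without solving it.

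The paper's proof addresses exactly this point, and the machinery is not incidental. It introduces $\varepsilon$-\emph{trisections} $U,L,X$ with geometrically decreasing row counts, a \emph{weighted} majority (Definition~\ref{def:weighted-majority}) that compensates for the different sampling densities in $U$ versus $L$, and crucially Lemma~\ref{lem:swc-gap}, which shows that a solution computed from $U\cup L$ alone remains a $(1+O(\varepsilon))$-approximation even after the rows of $X$ are afterwards reassigned to match the optimum. This reassignment slack is what lets consecutive DP blocks be stitched: the $L$-part of one block becomes the $U$-part of the next, and the yet-unseen rows are absorbed by~$X$. For general subinterval-free instances the paper then picks anchor columns $q_1,q_2,\dots$ so that every row crosses exactly one $q_j$ (Lemma~\ref{lem:second_instance_setting} for the rooted case), and combines neighboring rooted sub-instances via a \emph{dominance} argument that controls the interference in the overlap. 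Your sliding-committee DP contains no analogue of the trisection, the weighted majority, or Lemma~\ref{lem:swc-gap}, and without them the approximation claim does not go through.

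Two minor points: storing ``the most recent $O(1/\varepsilon)$ bits of $\sigma,\sigma'$'' is unnecessary, since a row's cost depends on all bits in its interval and can be carried as a running scalar; and the claim that each sweep step adds and removes at most one committee member fails when several rows share a start or end column.
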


\subsection{Overview of our approach.}

Our algorithm is a dynamic program (DP) that is composed of several levels.
Given a general \GMEC instance, we decompose the rows of the instance into length classes according to the length of the contiguous binary parts of the rows.
For each length class we consider a well-selected set of columns such that each row crosses at least one column and at most two.
(Row $i$ crosses a column $j$, if $M_{i,j} \in \{0,1\}$, i.e., the binary part of the row contains that column.)

We further decompose each length class into two sub-classes, one that crosses exactly one column and one that crosses exactly two columns.
For the second class, it is sufficient to consider every other column, which leaves us with many \emph{rooted} instances.
Thus for each sub-instance there is a single column (the root) which is crossed by all rows of the instance.

We further decompose rooted sub-instances into the left hand side and the right hand side of the root.
Since the two sides are symmetric, we can arrange the rows and columns of these sub-instances in such a way that all rows cross the first column.
We call this type of sub-instance \emph{SWC-instance} (for ``simple wildcards'').
We order the rows from top to bottom by increasing length in order to be able to further decompose the instance.

The first level of our DP solves these highly structured SWC-instances.
The basic idea that we would like to apply is that we select a constant number of rows from the instance that represents the solution.
Without further precautions, however, this strategy fails because of differing densities within the instance: 
the selected rows have to represent both the entries of columns crossed by many short rows and entries of arbitrarily small numbers of rows crossing many columns.
To resolve this issue, we observe that computing the solution strings $\sigma$ and $\sigma'$ is equivalent to finding a partition of $M$ into two row sets, one assigned to $\sigma$ and the other assigned to $\sigma'$.
If we assume to have the guarantee that for both solution strings $\sigma$ and $\sigma'$ an $\varepsilon$ fraction of rows of the matrix $M$ forms a \BMEC sub-instance, we show that the basic idea works.

This insight motivates to separate SWC-instances from left to right into sub-instances with the required property and to assemble them from left to right using a DP.
There are, however, several complications.
In order to choose the right sub-instances, we have to take into account that the choice depends on which rows are assigned to $\sigma$ and which are assigned to $\sigma'$.
Therefore the DP has to take special care when identifying the sub-instances.

Furthermore, in order to stitch sub-instances together to form a common solution, the solution computed in the left sub-instance has to compute a set of candidate solutions oblivious of the choices of the right sub-instance.
This means that we have to compute a solution to the left sub-instance without looking at a fraction of rows.
We present an algorithm for these sub-instances in Section~\ref{sec:swc}.

In order to combine the sub-instances, we face further technical complications due to having distinct sub-instances for those rows assigned to $\sigma$ and those rows assigned to $\sigma'$.
In Section~\ref{sec:second}, we introduce a DP whose DP cells are pairs of simpler DP cells, one for $\sigma$ and one for $\sigma'$.

Before we consider general instances, we first develop our techniques by considering subinterval-free instances which are easier to handle (Section~\ref{sec:subinterval-free}).
Observe that the instances considered until now are special rooted sub-interval-free instances.
We show how to solve arbitrary rooted sub-interval-free instances by combining the DP with additional information about the sub-problems that contain the root.
We then introduce the notion of domination in order to combine rooted sub-interval-free instances with a DP proceeding from left to right.
The main idea is that a dominant sub-problem dictates the solution.
At the interface of two sub-instances, there can be a (contiguous) region where none of the two sub-problems is dominant.
We show that these regions can be solved directly by considering a constant number of rows (using the results from Section~\ref{sec:swc}).

Until this point, all parts of our algorithm run in polynomial time.
We lose this property when considering length classes, in Section~\ref{sec:length-classes}.
The length classes allow us to separate an instance into rooted sub-instances.
The difficulty is that the left hand side of a separating column may have a completely different structure than the right hand side of that column.
We do not know how to combining the two sides by considering only a polynomial number of possibilities.
If we allow, however, quasipolynomial running time, we can solve the problem. 
We use that each of the two sub-instances (the one on the left and the one on the right) is composed of at most logarithmically many parts.
Considering all parts simultaneously allows us to take care of dependencies between the left hand side and the right hand side and still solve them as if they were separate instances.

Combining such rooted instances from left to right then can be done in the same spirit as combining rooted sub-interval-free instances.
To solve the entire length-class, we combine both solutions by running a new DP that considers quadruples of DP cells.

Finally, in Section~\ref{sec:generalQPTAS}, we are able to handle all length classes simultaneously. 
We solve general instances in the same spirit as the combined sub-instances of a single length class.
Instead of considering quadruples of cells, however, we form collections of quadruples that are -- figuratively speaking -- stacked on top of each other.
The key insight is that there are only $O(\log(n))$ different length classes and each collection has at most one quadruple of each length class.
Considering all possible collections adds another power of $\log(n)$ to the running time, which is still quasi-polynomial.

\subsection{Further related work.}
Binary-MEC is a variant of the Hamming $k$-Median Clustering Problem when $k = 2$ and there are PTASs known~\cite{JXL04_k, OR02_polynomial}. 
Li, Ma, and Wang~\cite{LMW02_finding} provided a PTAS for the general consensus pattern problem which is closely related to \MEC.
Additionally, they provided a PTAS for a restricted version of the star alignment problem aligning with at most a constant number of gaps in each sequence.

Alon and Sudakov~\cite{AS99_two} provided a PTAS for H2S, the maximization version of \BMEC and Wulff, Urner and Ben-David~\cite{WUB13_monochromatic} showed that there is also a PTAS for the maximization version of \MEC.
For \GMEC, He et al.~\cite{HCP+10_optimal} studied the fixed-parameter tractability in the parameter of fragment length with some restrictions.
These restrictions allow their dynamic programming algorithm to focus on the reconstruction of a single haplotype and, hence, to limit the possible combinations for each column.
There is an FPT algorithm parameterized by the coverage~\cite{PMP+15_whatshap,garg2016read} (and some additional parameters for pedigrees). 
Bonizzoni et al.~\cite{BDK+16_minimum} provided FPT algorithms parameterized by the fragment length and the total number of corrections for Gapless-MEC.
There are some tools which can be used in practice to solve Gapless-MEC instances~\cite{PZD+15_hapcol, PMP+15_whatshap}.

Most research in haplotype phasing deals with exact and heuristic approaches to solve \BMEC.
Exact approaches, which solve the problem optimally, include integer linear programming~\cite{FM12_solving} and fixed-parameter tractable algorithms~\cite{HCP+10_optimal, PZD+15_hapcol}.
There is a greedy heuristic approach proposed to solve Binary-MEC~\cite{BB08_hapcut}. 

Lancia et al.~\cite{LBI+01_snps} obtained a network-flow based polynomial time algorithm for Minimum Fragment Removal (MFR) for gapless fragments.
Additionally, they found the relation of Minimum SNPs Removal (MSR) to finding the largest independent set in a weakly triangulated graph.

\subsection{Preliminaries and notation.}\label{sec:prelim}

We consider a \GMEC instance, which is a matrix $M \in \{0,1, -\}^{n \times m}$.
The $i$th row of $M$ is the vector $M_{i,*} \in \{0,1, -\}^{1 \times m}$ and the $j$th column is the vector $M_{*,j} \in \{0,1, -\}^{n \times 1}$.
The length of the binary part in $M_{i,*}$ is $|M_{i,*}|$. 
We say that the $i$th row of $M$ \emph{crosses} the $j$th column if $M_{i,j} \in \{0,1\}$.

For each feasible solution $(\sigma,\sigma')$ for $M$, we specify an assignment of rows $M_{i,*}$ to solution strings.
The default assignment is specified as follows.
For a row $M_{i,*}$, we assign $M_{i,*}$ to $\sigma$ if $\dist(\sigma,M_{i,*}) \le \dist(\sigma',M_{i,*})$.
Otherwise we assign $M_{i,*}$ to $\sigma'$.
For the rows of $M$ assigned to $\sigma$ we write $\sigma(M)$ and for the rows assigned to $\sigma'$ we write $\sigma'(M)$.
For a given instance, $\Opt = (\tau, \tau')$ denotes an optimal solution.
Observe that knowing $\Opt$ allows us to obtain an optimal assignments $\tau(M)$ and $\tau'(M)$ by assigning each row to the solution string with fewest errors and knowing $\tau(M)$ and $\tau'(M)$ allows us to obtain an optimal solution by selecting the column-wise majority values.

\section{Simple instances with wildcards.}\label{sec:swc}
In this section, we consider instances of \GMEC where all entries of column one in $M$ are zero or one, i.e., $M_{i,1} \in \{0,1\}$ for each index $i$.
Observe that the wildcards now have a simple structure which we refer to as SWC-structure.
An instance with SWC-structure is an SWC-instance.

\begin{definition}[Standard ordering of SWC-instances]
    We define the \emph{standard ordering} of rows in $M$ such that $|M_{i,*}| \le |M_{i+1,*}|$ for each $i$, \ie, we order them from top to bottom in increasing length of the binary part. 
    \label{def:order-SWC}
\end{definition}

\begin{definition}[Good SWC-instances]
    \label{def:good-SWC}
    We call an SWC-instance $M$ \emph{good}, if it is in standard ordering and there are at least $\varepsilon |\tau(M)|$ rows of $\tau(M)$ and at least $\varepsilon |\tau'(M)|$ rows of $\tau'(M)$ that have only entries from $\{0,1\}$.
\end{definition}

To solve good SWC-instances, we generalize the PTAS for \BMEC by Jiao et al.~\cite{JXL04_k}. 
Our algorithm requires partitions of the set of rows.
In the following two definitions, the required number of rows may be a fractional number. 
To solve the problem, we allow the assignment of fractional rows, i.e., for a row $i$, we can choose an $x \in [0,1]$ and assign an $x$ fraction of $i$ to one set and a $1-x$ fraction to the other set. 

The following two definitions allow us introduce a structured view on optimal solutions.
\begin{definition}[Trisection]
    An \emph{$\varepsilon$-trisection} of an instance $M$ for $\tau$ is a partition of the rows into three consecutive ranges that have the following properties.
    \begin{enumerate}
        \item The first range $U$ contains row $M_{1,*}$ and $(1-\varepsilon) |\tau(M)|$ rows of $\tau(M)$.
        \item The second range $L$ is consecutive to first row set containing $(\varepsilon - \varepsilon^2)|\tau(M)|$ rows of $\tau(M)$.
        \item The third range $X$ contains the remaining rows in $M$.
    \end{enumerate}
    To avoid ambiguity, we choose $L$ and $X$ such that the first row is in $\tau(M)$.

    We define an $\varepsilon$-trisection $U'$, $L'$, and $X'$ for $\tau'$ analogously, replacing $\tau(M)$ by $\tau'(M)$.
    \label{def:trisection}
\end{definition}

\begin{definition}[Subdivision of trisections]
    We consider the rows sets $U, L, U', L'$ from Definition~\ref{def:trisection} and additionally, we divide each of these sets into $1/\varepsilon^2$ disjoint subsets denoted as $U_i, L_i, U'_i, L'_i$.
    For each $i$, $U_i$ contains $\varepsilon^2 \cdot |U|$ rows from $\tau(M)$ and $L_i$ contains $\varepsilon^2 \cdot |L|$ rows from $\tau(M)$.
    Analogously, each $U'_i$ contains $\varepsilon^2 \cdot |U'|$ rows from $\tau'(M)$ and $L'_i$ contains $\varepsilon^2 \cdot |L'|$ rows from $\tau'(M)$.
    To avoid ambiguity, each set $U_i$ and $L_i$ starts with a (fractional) row of $\tau(M)$ and each set $U'_i$ and $L'_i$ starts with a (fractional) row of $\tau'(M)$. 
    \label{def:subdivision}
\end{definition}

We introduce a new algorithm $\textsc{SWC}_\delta$ for our setting.
For an instance $M$, we consider the rows sets $U, L, U', L'$ from the $\varepsilon$-trisections of $M$ and their subsets according to Definition~\ref{def:subdivision}.
Additionally, we select a multi-set of rows from $U'_i \cap \tau'(M)$ and $L'_i \cap \tau'(M)$.
We then compute the majority weighting according to Definition~\ref{def:weighted-majority} for each column $j$ using 
multisets based on the minimum number of errors.
The main idea to find two small row sets that represent the whole instance $M$.
The intuitive meaning is that we select rows from the upper part with a much lower density then the rows of the lower part.
We therefore introduce a bias such that all rows are equally important.

\begin{definition}[Weighted majority]
    Let $j$ be an integer and let $\tilde{U}$ and $\tilde{L}$ be two matrices with at least $j$ columns.
    In $\tilde{U}_{*,j}$ and $\tilde{L}_{*,j}$, we replace all zeros by $-1$ and then all wildcard symbols by zero.
    We then compute the number
    $\nu := \sum_{i \in \tilde{U}_{i,j}} (1-\varepsilon)i/(\varepsilon - \varepsilon^2) + \sum_{i \in \tilde{L}_{i,j}} i$. 
    Then $\textsc{Majority}_j(\tilde{U}, \tilde{L}) = 0$ if $\nu <0$ and $\textsc{Majority}_j(\tilde{U}, \tilde{L}) = 1$ if $\nu \ge 0$.
    \label{def:weighted-majority}
\end{definition}
With this preparation, we are now ready to present the algorithm.
The input has a long list of parameters that will allow our dynamic programs later on to control the execution.
The reason is that we do not know $\tau$ and $\tau'$. 
Therefore the algorithm takes \emph{guesses} of row sets as input.
The values $r$ and $r'$ are guesses of $|\tau(M)|$ and $|\tau'(M)|$.
\smallskip

\begin{algorithm}[H] 
    \caption{
        \label{alg:SWC}
        $\textsc{SWC}_\delta$}
    \SetAlgoNoLine
    \SetNlSkip{1em}
    \SetKwInOut{Input}{Input}
    \SetKwInOut{Output}{Output}
    \Input{Row sets $U_i$, $L_i$, $U'_i$ and $L'_i$ of a good SWC-instance $M$,  numbers $r, r'$.\\ Optional: selection of rows $\tilde{U}_i,\tilde{L}_i,\tilde{U}'_i,\tilde{L}'_i$, see below.}
    \Output{A pair of solution strings $(\sigma,\sigma')$.}
    Run the algorithm for each possible selection of the following type and keep the best outcome (minimum number of errors)\tcp*{If provided as input, skip selection.}
    For each $i$, select (with repetition) a multi-set $\tilde{U}_i$ of $1/\delta$ rows from $U_i$ and $\tilde{L}_i$ from $L_i$\;
    For each $i$, select (with repetition) a multi-set $\tilde{U}'_i$ of $1/\delta$ rows from $U'_i$ and $\tilde{L}'_i$ from $L'_i$ such that $\tilde{U}' \cap \tilde{U} = \tilde{L}' \cap \tilde{L} = \emptyset$\;
    \tcp{$\tilde{U} := \bigcup_i \tilde{U_i}$. The values $\tilde{U}'$, $\tilde{L}$, and $\tilde{L}'$ are defined analogously.}
    For each column $j$, set $\sigma_j := \textsc{Majority}_j(\tilde{U}, \tilde{L})$ and $\sigma'_j := \textsc{Majority}_j(\tilde{U}', \tilde{L}')$\;
    For each row $i$ of $M$, determine the value $d_i := \dist(\sigma,M_{i,*}) - \dist(\sigma',M_{i,*})$\;
    Assign the $r$ rows with minimal values $d_i$ to $\sigma$ and the remaining $r'$ rows to $\sigma'$.
\end{algorithm}
\smallskip

Observe that for small (\ie, constant) values of $r$ or $r'$, the algorithm $\textsc{SWC}_\delta$ can be replaced by an exact algorithm since we know $\tau(M)$ if and only if we know $\tau'(M)$, and we are able to guess constantly many rows.

\begin{lemma}\label{lem:SWC}
    Let $M$ be a good SWC-instance.
    For sufficiently large $r = |\tau(M)|$ and $r' = |\tau'(M)|$, let $U_i, L_i, U'_i, L'_i$ be a subdivision (Definition~\ref{def:subdivision}) of an $\varepsilon$-trisection $U,L,X,U',L',X'$ of $M$.
    Then $\textsc{SWC}_{\varepsilon^3}$ is a $(1 + O(\varepsilon))$-approximation algorithm for $M$. 
\end{lemma}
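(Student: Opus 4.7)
The plan is to analyse the expected cost of the output under a ``lucky'' selection in which each $\tilde{U}_i$ consists of $1/\delta$ i.i.d.\ uniform samples from $U_i \cap \tau(M)$, and analogously $\tilde{L}_i$ from $L_i \cap \tau(M)$, $\tilde{U}'_i$ from $U'_i \cap \tau'(M)$, and $\tilde{L}'_i$ from $L'_i \cap \tau'(M)$. Since $\textsc{SWC}_{\varepsilon^3}$ enumerates all multi-set choices (and $\tau(M) \cap \tau'(M) = \emptyset$ trivially satisfies the disjointness condition), any expectation bound under this distribution transfers to some enumerated deterministic choice. The goal is thus to show $\mathbb{E}[\cost_M(\sigma,\sigma')] \le (1+O(\varepsilon))\cost_M(\tau,\tau')$.

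For a fixed column $j$, let $A_j, B_j, C_j$ denote the signed counts (with $0 \mapsto -1$, $1 \mapsto +1$, wildcards contributing $0$) of $\tau(M)$-entries in the ranges $U, L, X$, so that $\tau_j$ is the sign of $A_j + B_j + C_j$. Because the standard ordering places the longest rows of $\tau(M)$ last and the good hypothesis supplies at least $\varepsilon|\tau(M)|$ fully binary rows in $\tau(M)$, the rows of $\tau(M) \cap (L \cup X)$ are all fully binary, which yields the uniform bound $|C_j| \le \varepsilon^2 |\tau(M)|$ and, importantly, the full-length property $|B_j/( |L \cap \tau(M)|)| \le 1$. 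The weight $w_U = (1-\varepsilon)/(\varepsilon-\varepsilon^2) = 1/\varepsilon$ chosen in Definition~\ref{def:weighted-majority} exactly compensates the size ratio $|U \cap \tau(M)| : |L \cap \tau(M)|$, so that $\mathbb{E}[\nu_j] \propto A_j + B_j$. A Hoeffding tail bound over the $O(1/(\delta\varepsilon^2))$ independent contributions, each of magnitude at most $1/\varepsilon$, then delivers exponentially small disagreement probability whenever the margin $|A_j + B_j|$ exceeds the estimator's standard deviation (which with $\delta = \varepsilon^3$ scales like $\varepsilon^{1/2}|\tau(M)|$ times a small factor).

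Converting column-wise sign disagreements into a cost bound follows the template of Jiao et al.~\cite{JXL04_k}: if $\sigma_j \neq \tau_j$, the column-$j$ cost incurred on $\tau(M)$-rows increases by exactly $p_j = |A_j + B_j + C_j|$. For columns with large $|A_j + B_j|$ the Hoeffding bound yields $\sigma_j = \tau_j$ except with probability $O(\varepsilon)$, contributing negligible expected excess; for columns with small $|A_j + B_j|$, the triangle inequality gives $p_j \le |A_j + B_j| + \varepsilon^2|\tau(M)|$, which is charged against the column-$j$ optimum $(n_j^{\tau} - p_j)/2$, itself close to $n_j^{\tau}/2$ in this ``balanced'' regime. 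Summing per-column damages and adding the symmetric argument for $\sigma'$ produces the desired expected bound for the default assignment. The algorithm's final step, which picks the $r$ rows of smallest $d_i$ for $\sigma$, realises the best size-$(r,r')$ assignment for the fixed pair $(\sigma,\sigma')$ and is therefore at least as good as the default assignment used in the analysis, so the bound survives the step.

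The main obstacle I anticipate is the careful bookkeeping needed to push the column-wise estimates through to an $O(\varepsilon)\cost_M(\tau,\tau')$ bound rather than the worst-case $O(\varepsilon)\cdot m|\tau(M)|$: the $1/\varepsilon$ weight on $\tilde{U}$ inflates the per-sample variance by $1/\varepsilon^2$, so the sampling budget $1/\delta = 1/\varepsilon^3$ must be tight enough to drive the standard deviation below the target margin; the $\varepsilon^2|\tau(M)|$ slack from $|C_j|$ must be amortised against the optimum without double-counting across columns; and the separation between ``large-margin'' and ``balanced'' columns must be chosen so that both regimes simultaneously contribute at most $O(\varepsilon)$ times the respective column-optimal cost. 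Making this quantitative, including the correct tuning of the threshold and reconciling the $\tau$- and $\tau'$-side analyses, is the technically delicate part.
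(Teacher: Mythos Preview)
Your plan mirrors the paper's proof: both analyse a ``lucky'' random selection drawn uniformly from $U_i\cap\tau(M)$, $L_i\cap\tau(M)$ and their primed analogues, both work column by column, and both compare the expected cost of the weighted-majority vote to the column optimum. Your signed-count formulation is equivalent to the paper's fraction-of-zeros formulation, and your observation that $L\cup X$ is fully binary (so $|C_j|\le\varepsilon^2|\tau(M)|$) is exactly what the good-SWC hypothesis provides. The paper also explicitly discards the boundary chunk $U_s$ (the first $U_i$ that meets column~$j$) and absorbs its at most $\varepsilon^2 r$ rows into the slack; your wildcard-as-zero convention handles this implicitly.

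There is, however, one genuine gap. In the large-margin regime you invoke Hoeffding to get $\sigma_j=\tau_j$ ``except with probability $O(\varepsilon)$'' and call the resulting expected excess negligible. But on a highly unbalanced column the excess of a wrong vote equals $p_j\approx n_j^\tau$, while the column optimum $q_j\, n_j^\tau$ can be arbitrarily small; a uniform $O(\varepsilon)$ disagreement probability then yields expected excess $\Theta(\varepsilon\, n_j^\tau)$, which is \emph{not} $O(\varepsilon)\cdot q_j n_j^\tau$, so the per-column ratio blows up. No choice of a fixed Hoeffding threshold repairs this: the function $p\mapsto p\cdot\exp(-c\,p^2/(\varepsilon^5 r^2))$ is bounded below by a positive constant times $\varepsilon^{2.5}r$ near its maximum, whereas $(n_j^\tau-p)/2$ can be made as small as one likes. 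The paper closes exactly this gap by switching to a \emph{multiplicative} Chernoff bound in that regime (its Case~1): with $\mu'=\mathbb{E}[Y']$ proportional to the minority fraction, one obtains $\Pr[\text{wrong}]\le\varepsilon^4\mu'$, i.e.\ a disagreement probability that scales \emph{linearly with $q_j$}, and that scaling is what makes the column-wise $(1+O(\varepsilon))$ bound go through. Your Hoeffding-only plan needs this extra case. (A minor side issue: your standard-deviation estimate $\varepsilon^{1/2}|\tau(M)|$ is off---the $1/\varepsilon$ weights on $\tilde U$ push the variance of $\nu$ to order $1/\varepsilon^7$, which in the $|A_j+B_j|$ scale corresponds to $\Theta(\varepsilon^{2.5}|\tau(M)|)$---but this is secondary once the right inequality is in place.)
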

The proof is based on a randomized argument using Chernoff bounds. (See Appendix~\ref{app:SWC}).

Lemma~\ref{lem:SWC} shows that the set of solutions considered by $\SWC$ contains at least one solution that is good enough even though we do not look at $X$. 
It does not say that we finally compute that solution, since other solutions may have fewer errors in $U\cup L$ or $U' \cup L'$.
For our dynamic programs, we need a stronger statement.
We would like to be able to compute a solution for an instance and \emph{afterwards} change a fraction of assignments without losing the approximation guarantee.
The next lemma is a key ingredient of our result.

\begin{lemma}
    \label{lem:swc-gap}
    Let $M$ be a good SWC-instance and $\varepsilon > 0$ sufficiently small.
    Let $U,L,X$ be an $\varepsilon$-trisection for $\tau(M)$ and $U',L',X'$ an $\varepsilon$-trisection for $\tau'$, with subdivisions $U_i,L_i,U'_i,L'_i$ according to Definition~\ref{def:subdivision}.
    Let $(\sigma, \sigma')$ be the solution computed by $\textsc{SWC}_{\varepsilon^3}$ with $r = |\tau(M)|$, $r' = |\tau'(M)|$.
    Then re-assigning the rows $\sigma(X)$ to $\tau(X)$ and $\sigma'(X')$ to $\tau'(X')$ gives a $(1 + O(\varepsilon))$-approximation for the instance $M$.
\end{lemma}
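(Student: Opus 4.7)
For an assignment $B$ of rows to solution strings, let $\cost_B(\pi,\pi')$ denote the cost of $(\pi,\pi')$ under $B$; the original $\cost(\pi,\pi')$ corresponds to $B$ being the default minimum-distance assignment for $(\pi,\pi')$. Denote by $A$ the default minimum-distance assignment of $(\sigma,\sigma')$ that is implicit in Lemma~\ref{lem:SWC}; by $\tilde A$ the assignment of $(\sigma,\sigma')$ that mirrors the optimal partition (rows of $\tau(M)$ go to $\sigma$, rows of $\tau'(M)$ go to $\sigma'$); and by $A'$ the re-assignment from the statement, which coincides with $A$ off $X \cup X'$ and with $\tilde A$ on $X \cup X'$.

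The first step is an immediate comparison: $\cost_{A'}(\sigma,\sigma') \le \cost_{\tilde A}(\sigma,\sigma')$. The two assignments agree on $X \cup X'$ by construction; on the complement $A'$ uses $A$, which is row-wise optimal for $(\sigma,\sigma')$ and therefore no worse than $\tilde A$ restricted to those rows.

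The main step is to prove $\cost_{\tilde A}(\sigma,\sigma') \le (1+O(\varepsilon))\,\cost(\tau,\tau')$. The plan is a column-by-column accounting. Let $n_j^\tau$ count the rows of $\tau(M)$ crossing column $j$ and $e_j^\tau$ count those of them whose entry at $j$ disagrees with $\tau_j$; define $n_j^{\tau'}$, $e_j^{\tau'}$ analogously. The contribution of the $\tau(M)$-rows at column $j$ to $\cost_{\tilde A}(\sigma,\sigma')$ equals $e_j^\tau$ when $\sigma_j = \tau_j$ and $n_j^\tau - e_j^\tau$ when $\sigma_j \ne \tau_j$, so the excess of $\cost_{\tilde A}(\sigma,\sigma')$ over $\cost(\tau,\tau')$ amounts to $\sum_{j:\sigma_j \ne \tau_j}(n_j^\tau - 2 e_j^\tau) + \sum_{j:\sigma'_j \ne \tau'_j}(n_j^{\tau'} - 2 e_j^{\tau'})$. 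The Chernoff estimate underlying Lemma~\ref{lem:SWC}, applied to the multisets $\tilde U_i, \tilde L_i, \tilde U'_i, \tilde L'_i$ that the algorithm enumerates, controls exactly this quantity: $\textsc{Majority}_j$ can flip only on columns where the corresponding discrepancy is a small multiple of $\varepsilon\, e_j^\tau$ (respectively $\varepsilon\, e_j^{\tau'}$), and because the algorithm tries all samples of the prescribed size and retains the best, the total excess is deterministically $O(\varepsilon)\,\cost(\tau,\tau')$. Chaining with the first step yields the claim.

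The main obstacle is precisely that Lemma~\ref{lem:SWC} is stated for the default assignment $A$, whereas the main step needs the analogous bound under the fixed partition $\tilde A$, and in general $A$ and $\tilde A$ route rows very differently---so no naive transfer between $\cost_A(\sigma,\sigma')$ and $\cost_{\tilde A}(\sigma,\sigma')$ works. The resolution is that the column-by-column Chernoff analysis behind Lemma~\ref{lem:SWC} is most naturally carried out with respect to the optimal partition; the final replacement of $\tilde A$ by the min-distance $A$ can only decrease cost. The bound of the main step can thus be read off the proof of Lemma~\ref{lem:SWC}, or equivalently re-derived by running the Chernoff estimate with $\tilde A$ in place of $A$ throughout.
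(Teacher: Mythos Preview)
The proposal has a genuine gap in the main step. You want to show $\cost_{\tilde A}(\sigma,\sigma') \le (1+O(\varepsilon))\cost(\tau,\tau')$ for the pair $(\sigma,\sigma')$ \emph{actually output} by $\SWC$. You appeal to the Chernoff analysis behind Lemma~\ref{lem:SWC}, noting (correctly) that it is carried out with respect to the optimal partition and hence yields a selection $T^*$ with $\cost_{\tilde A}(\sigma^{T^*},\sigma'^{T^*}) \le (1+O(\varepsilon))\opt$. But $\SWC$ does not keep $T^*$: it keeps the selection $T$ minimising the number of errors under its \emph{own} assignment (the constrained one that sends the $r$ rows of smallest $d_i$ to $\sigma$). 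From the chain
\[
\cost_{\mathrm{constr}}(\sigma^T,\sigma'^T)\ \le\ \cost_{\mathrm{constr}}(\sigma^{T^*},\sigma'^{T^*})\ \le\ \cost_{\tilde A}(\sigma^{T^*},\sigma'^{T^*})\ \le\ (1+O(\varepsilon))\opt
\]
you only obtain a bound on $\cost_{\mathrm{constr}}(\sigma^T,\sigma'^T)$; nothing here bounds $\cost_{\tilde A}(\sigma^T,\sigma'^T)$. Concretely, the algorithm's output could be a near-optimal pair whose best assignment routes many rows of $\tau(M)$ to $\sigma'$ (after all, the algorithm does not know $\tau$); for such a pair the $\tilde A$-cost can be huge even though the algorithm's cost is small. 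Your proposed resolution (``re-derive the Chernoff estimate with $\tilde A$ in place of $A$'') only reproduces the \emph{existence} statement for $T^*$ and does not close this gap. The same issue also undercuts your first step: the off-$(X\cup X')$ part of $A'$ is the algorithm's constrained assignment, not the row-wise minimum, so $\cost_{A'}\le\cost_{\tilde A}$ is not automatic either.

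This is exactly why the paper argues differently. It uses only the bound $\eta_P \le (1+O(\varepsilon))\eta$ on $P = U\cup L$ under the algorithm's assignment, together with the overlap $|\sigma(P)\cap\tau(P)| \ge n/6$ (after renaming), to prove by a counting argument that $\sigma_j=\tau_j$ on all but $O(\eta/n)$ columns (Claim~\ref{claim:agree}). A second counting argument (Claim~\ref{claim:errors}) shows that on any remaining column with $\sigma'_j\neq\tau'_j$ there are already $\Omega(\varepsilon|\tau'(M)|)$ errors to charge against, which dominates the $\varepsilon^2|\tau'(M)|$ rows of $X'$. These two claims are the missing ingredient: they convert a bound on the algorithm's cost into \emph{column-wise agreement} with $(\tau,\tau')$, and it is this agreement---not a direct bound on $\cost_{\tilde A}$---that lets one control the re-assigned rows. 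Without such a structural step (or an equivalent), the transfer from Lemma~\ref{lem:SWC} to Lemma~\ref{lem:swc-gap} does not go through.
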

\begin{proof}
    For ease of presentation, we assume that all appearing numbers are integers.
    It is easy to adapt the proof by rounding fractional numbers appropriately.

    We first analyze the computed solution string $\sigma$. 
    Let $\eta$ be the total number of errors of $(\tau,\tau')$ within $M$ and
    let $\eta_P$ be the total number of errors of $(\sigma,\sigma')$ within $P := U \cup L$.
    Due to Lemma~\ref{lem:SWC}, we have $\eta_P \le (1+O(\varepsilon)) \eta$.

    We may assume $r \ge r'$ since otherwise we can simply rename the two strings $\tau$, $\tau'$.
    Additionally, by renaming of $\sigma$ and $\sigma'$, we may assume that $|\sigma(P) \cap \tau(P)| \ge |\sigma'(P) \cap \tau(P)|$.
    Therefore $|\tau(P)| \ge n/3$ and $|\sigma(P) \cap \tau(P)| \ge n/6$.
    (Recall that the matrix $M$ has $n$ rows and $m$ columns. The value $n/3$ is a save bound on $n/2 - \varepsilon^2 n$.)

    \begin{claim}\label{claim:agree}
        There is a set $I$ of $m - 25\eta/n$ indices $j$ such that
        $\sigma_j = \tau_j$ for all $j \in I$.
    \end{claim}
    \begin{subproof}
        We concentrate on the columns of $M$ where both strings $\tau$ and $\sigma$ have at most $n/12$ errors within $P$.
        By counting the errors, there are at most $12 \eta/n$ columns where $\tau$ has at least $n/12$ errors.
        Similarly, there are at most $12 (1+O(\varepsilon)) \eta_P/ n < 13 \eta/ n$ many columns where $\sigma$ has at least $n/12$ errors.
        Therefore there is a set $I$ of at least $m - 25\eta/n$ columns where simultaneously both $\tau$ and $\sigma$ have less than $n/12$ errors each.

        Now suppose that the claim was not true and there was an index $j \in I$ with $\tau_j \neq \sigma_j$.
        Then, since $|\tau(P) \cap \sigma(P)| \ge n/6$, either $\sigma_j$ or $\tau_j$ is erroneous in at least $n/12$ rows of $\tau(P) \cap \sigma(P)$, a contradiction.
    \end{subproof}

    Next we analyze $\sigma'$ for the columns $I$.
    Let $j$ be a column (\ie, an index) from $I$.
    By symmetry, we may assume $\sigma_j = \tau_j = 0$. 
    We aim to show that an optimal solution has always sufficiently many errors to pay for wrong entries of $\sigma'$.

    Let $\eta_j$ be the number of errors of $(\tau,\tau')$ in column $j$ of $M$ and
    let $\eta_{P,j}$ be the number of errors of $(\sigma,\sigma')$ in column $j$ of $P$.
    Let $\eta''_j = \eta_j + \eta_{P,j}$.
    \begin{claim}\label{claim:errors}
        For each column $j$ of $I$, either $\sigma'_j = \tau'_j$ or $\eta''_j \ge (\varepsilon-\varepsilon^2)|\tau'(M)|/2$.
    \end{claim}
    \begin{subproof}
        We distinguish two cases.
        We first assume $\tau'_j = 0$.
        If also $\sigma'_j = 0$, we are done. 
        We therefore assume $\sigma'_j = 1$.
        If there are more than $|\tau'(L')|/2$ ones in column $j$ of $L'$, $(\tau,\tau')$ has more than $|\tau'(L')|/2$ errors in column $j$ and thus 
        $\eta_j \ge |\tau'(L')|/2$.
        Otherwise $\sigma'(L')$ has at least $|\tau'(L')|/2$ zeros in column $j$ and therefore $\eta_{P,j} \ge |\tau'(L')|/2$.
        We obtain $\eta''_j \ge |\tau'(L')|/2 \ge (\varepsilon-\varepsilon^2) |\tau'(M)|/2$ as claimed.

        In the second case, $\tau'_j = 1$ and we assume that $\sigma'_j = 0$.
        If there are more than $r'/2$ ones in column $j$ of $U'$, $(\sigma,\sigma')$ has more than $r'/2$ errors in column $j$ and thus $\eta_{P,j} \ge |\tau'(U')|/2$.
        Otherwise $\tau'(U')$ has at least $r'/2$ zeros in column $j$ and therefore $\eta_j \ge |\tau'(U')|/2$.
        Again, we obtain $\eta''_j \ge |\tau'(U')|/2 \ge (1-\varepsilon) |\tau'(M)|/2$ as claimed.
    \end{subproof}

    Since by our assumption $|\tau'(X')| < \varepsilon^2 |\tau'(M)|$,
    Claim~\ref{claim:errors} implies that within $I$, after reassigning the rows we still have a $(1+ O(\varepsilon))$-approximation.

    To finish the proof, we argue that $\eta$ is large enough to pay for all errors in $X$ and $X'$ outside of $I$.
    Let $\eta_I$ be the number of errors due to assigning $\sigma$ to $\tau(X)$ and $\sigma'$ to $\tau'(X')$ within the interval $I$.

    Then, using the size of $I$ stated in Claim~\ref{claim:agree}, the total number of errors of $(\sigma,\sigma')$ in $M$ is at most
    $(1+O(\varepsilon)) \eta + \eta_I + \varepsilon^2 n \cdot 25\eta/n$, \ie, the errors of $\textsc{SWC}_{\varepsilon^3}$ within $P$, the errors within $X$ and $X'$ in the columns of $I$, and all other entries of $X \cup X'$.
    The obtained approximation ratio is

    $((1+O(\varepsilon)) \eta + \eta_I + \varepsilon^2 n \cdot 25\eta/n/\eta 
    \le (\eta + O(\varepsilon) \eta + 25 \varepsilon^2 \eta)/\eta
    = 1 +  O(\varepsilon)$.

    The first inequality uses that for some constant $k$, $(1 + k \varepsilon) \eta \ge \eta + \eta_I$.
\end{proof}

\subsection{A DP for SWC-instances.}
\label{sec:second}

Let $M$ be an SWC-instance with rows  $\{1, 2, \ldots n\}$. 
We define $\st_i$ to be the start and $\en_i$ the end of string number $i$ of $M$, i.e., the column number of the matrix where the binary part starts and ends.
For a sub-matrix $M'$ of $M$, $\st_{M'}$ determines the index of the first column of $M'$ and $\en_{M'}$ the index of the last column of $M'$.

We next specify the parts of which the DP cells are composed.
We divide the input instance into blocks defined as follows.
\begin{definition}[Block]
    Given a good SWC-instance $M$, a block $B$ is a sub-instance determined by three numbers $1 \le a < b < c \le n$ as follows.
    The first column of $B$ is column $1$ of $M$. 
    The last column of $B$ is $\en_{b}$.
    The first row of $B$ is $a$ and the last row is $n$.
    We write $U_B$ for the rows from $a$ to $b - 1$, $L_B$ for the rows from $b$ to $c - 1$, and $X_B$ for the rows from $c$ to $n$. 
    \label{def:sets-one-solution-string}
\end{definition}
The idea is that a block determines a trisection. 
We subdivide each block into chunks and select rows from these chunks. 
Chunks are closely related to subdivisions of trisections, but we do not assume the knowledge of $(\tau,\tau')$.
\begin{definition}[Chunk] 
    Let $B$ be a block determined by the numbers $a,b,c$.
    We partition $B$ into $2/\varepsilon^2$ many \emph{chunks} (ranges or rows).
    These chunks are determined by numbers 
    \[
    a = a_1 < a_{2} < \dotsm < a_{1/\varepsilon^2 + 1} = b = b_{1} < b_{2} < \dotsm < b_{1/\varepsilon^2 + 1} = c\,.
    \]
    The $\ell$th chunk of $U_B$ is the submatrix composed of the rows $a_{\ell}$ to $a_{\ell+1}-1$ and the $\ell$th chunk of $L_B$ is the submatrix composed of the rows $b_{\ell}$ to $b_{\ell+1}$.
    \label{def:subsets-one-solution-string}
\end{definition}

\begin{definition}[Selection]
    \label{def:selection}
    For each block $B$ with a set of chunks $C$, we consider multiset $T$ of rows of size $2/\varepsilon^5$.
    We require that $T$ contains $1/\varepsilon^3$ rows from each chunk in $C$.
\end{definition}
The selection $T$ will take the role of $\tilde{U}$ and $\tilde{L}$ in $\textsc{SWC}_\delta$.
With these preparations we can define a DP cell.
\begin{definition}[DP cell]
    For each block $B$, each set of chunks $C$ of $B$ and each selection $T$ of rows from $B$, there is a DP cell represented by $D(B,C,T)$. 
    A DP cell $D(B,C,T)$ is a \emph{predecessor} of $D(\hat B,\hat C,\hat T)$ if the following conditions hold.
    \begin{itemize}
        \item $\hat{a} = b$ and $\hat{b} = c$, where $b,c,\hat{a},\hat{b}$ are the numbers from Definition~\ref{def:sets-one-solution-string}.
        \item The chunks from $C$ between $b$ and $c$ are exactly the chunks from $\hat{C}$ between $\hat{a}$ to $\hat{b}$.
        \item For each pair of chunks from $T \times \hat{T}$ with the same range of rows, the selection $T$ matches the selection $\hat{T}$. 
            \label{def:dp-cell-one-solution-string}
    \end{itemize}
\end{definition} 

The value of $D(B,C,T)$ will be an approximation of the minimum number of errors
that we can have in $M$ until the last column of $B$.

We now describe the dynamic program for a pair of solution strings $(\sigma, \sigma')$ by using joint DP cells $(\zeta,\zeta')$ (see also Fig.~\ref{fig:DP-crux1}). 

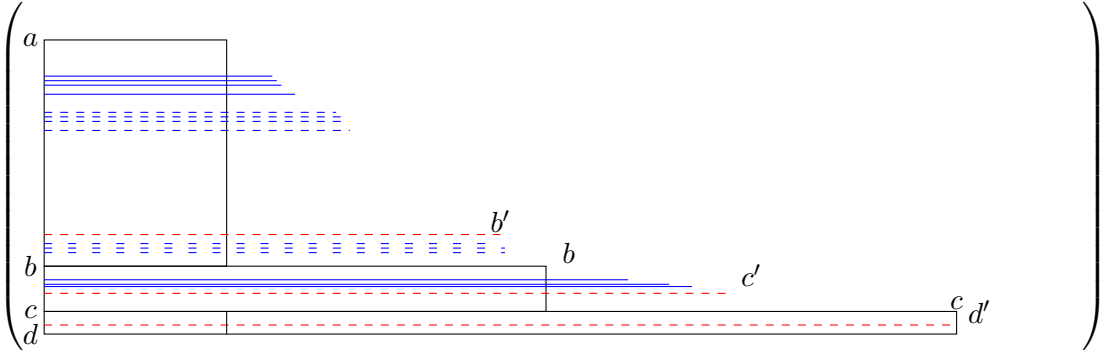
\begin{figure}[h]
    \begin{center}
        \begin{tikzpicture}[scale=0.6]
            \draw (-2,-2) rectangle (2,3);      
            \node at (-2.3,3) {$a$};
            \draw (-2,-3) rectangle (9,-2);     
            \node at (-2.3,-2) {$b$};
            \node at (-2.3,-3) {$c$};
            \node at (-2.3,-3.5) {$d$};
            \draw[red,dashed] (-2,-1.3) -- (8,-1.3);
            \node at (8, -1) {$b'$};
            \node at (9.5, -1.7) {$b$};
            \node at (13.5, -2.2) {$c'$};
            \draw (-2,-3.5) rectangle (18,-3);
            \draw[red,dashed] (-2,-2.6) -- (13,-2.6);
            \draw[red,dashed] (-2,-3.3) -- (18,-3.3);
            \node at (18.5, -3) {$d'$};
            \node at (18, -2.8) {$c$};
            \node at (-2.7,0){\(\left(\rule{0cm}{2.4cm}\right.\)};
            \node at (21,0){\(\left.\rule{0cm}{2.4cm}\right)\)};
            \draw (2,-3) to (2,-3.5);
            \draw[blue,dashed] (-2,1) -- (4.7,1);
            \draw[blue,dashed] (-2,1.2) -- (4.6,1.2);
            \draw[blue,dashed] (-2,1.3) -- (4.5,1.3);
            \draw[blue,dashed] (-2,1.4) -- (4.4,1.4);
            \draw[blue] (-2,1.8) -- (3.5,1.8);
            \draw[blue] (-2,2) -- (3.2,2);
            \draw[blue] (-2,2.1) -- (3.1,2.1);
            \draw[blue] (-2,2.2) -- (3,2.2);
            \draw[blue] (-2,-2.3) -- (10.8,-2.3);
            \draw[blue] (-2,-2.4) -- (11.7,-2.4);
            \draw[blue] (-2,-2.45) -- (12.2,-2.45);
            \draw[blue,loosely dashed] (-2,-1.5) -- (8,-1.5);
            \draw[blue,loosely dashed] (-2,-1.6) -- (8.1,-1.6);
            \draw[blue, loosely dashed] (-2,-1.7) -- (8.1,-1.7);
        \end{tikzpicture}
        \caption{\label{fig:DP-crux1} Example for a pair of strings with $b' > b$. The blue lines and dashed blue ones represent sets $T$ and $T'$, and $T \cap T' = \emptyset$.}
    \end{center}
\end{figure}

For $\sigma'$, we use the same notation as in Definitions~\ref{def:sets-one-solution-string}, \ref{def:subsets-one-solution-string} and \ref{def:selection}, but we use the symbol prime (\,$\cdot '$\,) for all occurring variables.

\begin{definition}[DP cell for a pair]
    We define joint DP cell $(\zeta,\zeta') = (D(B,C,T), D'(B',C',T'))$ with the two single cells defined as in Definition~\ref{def:dp-cell-one-solution-string}. 
    We require that
    \begin{itemize}
        \item the rows of $C$ and $C'$ where chunks start are pairwise distinct, and
        \item $T \cap T' = \emptyset$.
    \end{itemize}
    \label{def:dp-cell-two-solution-string}
\end{definition}
\smallskip

\begin{definition}[Predecessor of a joint DP cell]
    A DP cell $(\hat{\zeta}, \hat{\zeta}')$ is a \emph{predecessor} of $(\zeta, \zeta')$ if (i) $\hat{\zeta} = \zeta$ and $\hat{\zeta}'$ is a predecessor of $\zeta'$; or (ii) $\hat{\zeta}$ is a predecessor of $\zeta$ and $\hat{\zeta}' = \zeta'$.
    \label{def:predecessor-two}
\end{definition}

\subparagraph{Algorithm ($\textsc{SWC}^{\sigma, \sigma'}$).}
The general idea of the algorithm is to guess trisections.
Suppose we initially chose blocks $B,B'$ that are the left-most trisections for $\tau$ and $\tau'$.
Then we obtain an approximation of the prefix of $(\tau,\tau')$ restricted to $B,B'$ (whichever ends first) by sampling rows of $U_B$,$L_B$,$U_{B'}$, and $L_{B'}$.
The sampled rows for $L_B$ and $L_{B'}$ provide the interface to the next step.
Suppose $L_{B'}$ starts at an earlier row than $L_{B}$. 
Then we guess the trisection of $M$ for $\tau$ restricted to the rows of $L_{B'}$ and $X_{B'}$.
Let $B''$ be that block of our algorithm.
Then $U_{B''} = L_B$ and we sample rows of $L_{B''}$ in order to approximate a new infix of $\tau$.
For a simplified version of the DP without the complications due to having two solution strings, we refer to Appendix~\ref{sec:single}.

We globally guess two numbers $r$ and $r'$ that represent $|\tau(M)|$ and $|\tau'(M)|$.
We split the processing into an initialization phase and an update phase.
In the initialization phase, we assign values to each DP cell  $(\zeta,\zeta')$ based on $\SWC$ with the following parameters.
We obtain $U_i, L_i$ from the chunks $C$ and $U'_i,L'_i$ from the chunks $C'$.
In the execution of $\SWC$, we use the selections $T,T'$ instead of trying all possible selections, i.e., $T$ and $T'$ determine all $\tilde{U}_i$, $\tilde{L}_i$, $\tilde{U}'_i$, and $\tilde{L}'_i$ in the algorithm.
Let $\tilde{B}$ be the matrix with rows from $1$ to the $\min\{c - 1,c' - 1\}$ and columns one to $\min\{\en_B, \en_{B'}\}$.
The solution of the computation is a pair of strings $(\sigma_{\zeta,\zeta'},\sigma'_{\zeta,\zeta'})$, the prefixes of the two computed strings until $\en_{\tilde{B}}$.
The value of $(\zeta,\zeta')$ is $\cost_{\tilde{B}}(\sigma_{\zeta,\zeta'},\sigma'_{\zeta,\zeta'})$.

In the update phase, we compute the value and the pair of strings of the DP cell $(\zeta,\zeta')$ as follows. 
We inductively assume that all DP cells for predecessors of $(\zeta,\zeta')$ have been updated already.
We try all predecessor pairs of DP cells and keep the one that gives the best result (see also Appendix~\ref{sec:single}).
Let $(\overline{\zeta},\overline{\zeta}')$ be a predecessor of $(\zeta,\zeta')$.
By symmetry, we assume without loss of generality that $b' < b$. 
There are two cases how the two pairs interact.
The first case is $\zeta = \overline{\zeta}$.
We run $\SWC$ on the columns $\en_{\overline{B}'}+1$ to $\en_{B}$ with the parameters from $(\zeta,\zeta')$ (see initialization).
To obtain the full solution, we append the computed string for $B'$ to the string $\sigma'_{\zeta,\overline{\zeta}'}$ (which is one of the solution strings of the predecessor pair).
Let $\tilde{B}$ be the matrix with rows from $1$ to the $\min\{c - 1,c' - 1\}$ and columns one to $\en_{B'}$.
The solution of the computation is a pair of strings $(\sigma_{\zeta,\zeta'},\sigma'_{\zeta,\zeta'})$, the prefixes of the two computed strings from column one to $\en_{\tilde{B}}$.
The potential new value of $(\zeta,\zeta')$ is $\cost_{\tilde{B}}(\sigma_{\zeta,\zeta'},\sigma'_{\zeta,\zeta'})$.
We replace the stored solution with the potential new solution if the cost has decreased.

The second case is $\zeta' = \overline{\zeta}$.
This case is the crux of the joint DP, since we have a ``switch'' of the role of $\sigma$ and $\sigma'$.

We run $\SWC$ on the columns $\en_{\overline{B}}$ to $\en_{B'}$ with the parameters from $(\zeta,\zeta')$ (see initialization).
To obtain the full solution, we then append the computed string for $B$ to the string $\sigma_{\overline{\zeta},\zeta'}$ (which is one of the solution strings of the predecessor pair).
Let $\tilde{B}$ be the matrix with rows from $1$ to the $\min\{c - 1,c' - 1\}$ and columns one to $\en_{B'}$.
The solution of the computation is a pair of strings $(\sigma_{\zeta,\zeta'},\sigma'_{\zeta,\zeta'})$, the prefixes of the two computed strings until $\en_{\tilde{B}'}$.
The potential new value of $(\zeta,\zeta')$ is $\cost_{\tilde{B}}(\sigma_{\zeta,\zeta'},\sigma'_{\zeta,\zeta'})$.
We replace the stored solution with the potential new solution if the cost has decreased.

For the last strings,
we additionally consider special cells that are defined as before, but with $c=n$ or $c'=n$. 
Intuitively, we use these cells when only at most $1/\varepsilon^4$ rows of $\tau(M)$ or $\tau'(M)$ are left.
For pairs of cells containing such $\zeta$ or $\zeta'$, our computation considers the optimal solution within the computation instead of $\SWC$.

\begin{theorem}\label{thm:column1}
    The algorithm $\textsc{SWC}^{\sigma,\sigma'}$ is a PTAS for $SWC$-instances.\\
\end{theorem}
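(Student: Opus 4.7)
The plan is to split the argument into a polynomial running-time bound and a correctness analysis that exhibits a DP path whose associated cost is within $(1+O(\varepsilon))$ of $\Opt$. The running time is the easy part: a block is pinned by three indices $a<b<c\le n$, giving $O(n^3)$ blocks; the chunk boundaries contribute an $n^{O(1/\varepsilon^2)}$ factor and the selection $T$ an $n^{O(1/\varepsilon^5)}$ factor. Joint cells are pairs subject to the disjointness conditions of Definition~\ref{def:dp-cell-two-solution-string}, and each update iterates over predecessors while invoking $\SWC$, whose running time is $n^{O(1/\varepsilon^5)}$. The total work is $n^{\mathrm{poly}(1/\varepsilon)}$, which is polynomial for any fixed $\varepsilon$, so the algorithm is a PTAS provided correctness holds.

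For correctness, I would construct a canonical sequence $(\zeta_k,\zeta'_k)_k$ of joint DP cells driven by the optimum $\Opt=(\tau,\tau')$. Begin with the leftmost $\varepsilon$-trisection of $M$ for $\tau(M)$ together with its $1/\varepsilon^2$-subdivision (yielding $B_0$ with its chunks $C_0$) and the analogous object for $\tau'(M)$ (yielding $B'_0, C'_0$); for the selections $T_0, T'_0$ I plug in multi-sets of rows drawn from $\tau(M)$ and $\tau'(M)$ that realise the sampling guarantee hidden inside the proof of Lemma~\ref{lem:SWC} (such multi-sets exist because the Chernoff argument of that lemma succeeds with positive probability, and $\tau(M)\cap\tau'(M)=\emptyset$ makes them automatically disjoint). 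Proceed inductively: if $\en_{B'_k}<\en_{B_k}$, replace $B'_k$ by the leftmost $\varepsilon$-trisection of the $\tau'$-rows from row $c'_k$ onwards and carry $B_k$ across, symmetrically otherwise. Because each step consumes an $(\varepsilon-\varepsilon^2)$-fraction of the active color, the path has $O(1/\varepsilon)$ cells, and the tails with fewer than $1/\varepsilon^4$ remaining rows of a color are handled by the special end cells that compute exact optima.

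The heart of the argument is bounding the cost increment of each step. Let $\tilde B_k$ be the submatrix processed between $(\zeta_{k-1},\zeta'_{k-1})$ and $(\zeta_k,\zeta'_k)$: its rows are all currently live rows, and its columns form the fresh band ending at $\min\{\en_{B_k},\en_{B'_k}\}$. Standard ordering places the longest binary parts at the bottom, so every row at or below the first row of $L_{B_k}$ (respectively $L_{B'_k}$) is fully binary on the band, and the trisection inequality provides the required $\varepsilon$-fractions for both $\tau$ and $\tau'$; hence $\tilde B_k$ is a good SWC-instance. Lemma~\ref{lem:swc-gap} then certifies that the $\SWC$ call invoked with the chunks and selections prescribed by $(\zeta_k,\zeta'_k)$ produces strings whose cost on $\tilde B_k$ remains within a $(1+O(\varepsilon))$ factor of the local optimum, even after an adversarial reassignment of the rows in $X$ and $X'$ to match $\tau$ and $\tau'$. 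Summing over the $O(1/\varepsilon)$ steps and observing that each local optimum is at most the restriction of $\cost_M(\tau,\tau')$ to the corresponding band yields total cost $(1+O(\varepsilon))\Opt$.

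The principal difficulty I anticipate is the handoff between consecutive steps, especially the switch case in Definition~\ref{def:predecessor-two} where the predecessor and successor differ in which string is being advanced. The row assignments implicit in one $\SWC$ call may disagree with those needed by the next for rows that have just migrated from $X$ into $L$ (or even $U$) of the new block. It is precisely the reassignment clause of Lemma~\ref{lem:swc-gap} that absorbs this mismatch: because each step loses only a $(1+O(\varepsilon))$ factor under arbitrary reassignments of $X\cup X'$, the analysis may treat all row assignments as if they matched $(\tau,\tau')$ globally. The disjointness conditions imposed on joint cells ensure that the sampled rows for the two strings remain non-interfering across the switch, keeping the interleaved analyses well-defined.
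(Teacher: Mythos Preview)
Your proposal is correct and follows essentially the same approach as the paper: both construct a canonical DP path guided by the optimal partition $(\tau(M),\tau'(M))$, apply Lemma~\ref{lem:swc-gap} on each fresh band of columns, and absorb the handoff between consecutive cells (including the ``switch'' case) via the reassignment clause of that lemma. One minor slip: the canonical path has $O(\log_{1/\varepsilon} n)$ steps, not $O(1/\varepsilon)$, since each advance shrinks the remaining $\tau$- or $\tau'$-rows by a factor $\varepsilon^2$ (the new $U$ equals the old $L$, starting at row $b_k$ rather than $c_k$); this does not affect your conclusion, because the column bands are disjoint and the per-band $(1+O(\varepsilon))$ factors sum directly to $(1+O(\varepsilon))\Opt$.
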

\begin{proof}
    To see that the DP works in polynomial time, we observe that instead of simple DP cells in Lemma~\ref{lem:simpleDP} here we consider pairs of DP cells.
    Therefore the number of cells is squared and thus stays polynomial.
    During the recursive construction of the solution, we compare each cell to be computed with one compatible cell at a time.
    Therefore the construction of the solution also takes only polynomial time.
    As in Lemma~\ref{lem:simpleDP}, the computed solution is vacuously feasible.

    We continue with analyzing the quality of the computed solution.
    Let $(\tau,\tau')$ be an optimal solution.
    We set $r = |\tau(M)|$ and $r' := |\tau'(M)|$.
    By renaming the two strings we may assume that the last row of the first $(1-\varepsilon^2) r$ rows of $\tau(M)$ is below the first row of the last $\varepsilon^2 r'$ rows of $\tau'(M)$.

    We consider DP cells similar to the proof of Lemma~\ref{lem:simpleDP}.
    Starting from the top-most row of $\tau(M)$, for each $i \ge 0$, the $i$th range $Y_i$ contains the next 
    $(\varepsilon^{2i} - \varepsilon^{2i+2})r$ rows of $\tau(M)$.
    We assign the rows not in $\tau(M)$ such that the first row of each $Y_i$ is contained in $\tau(M)$.
    Then we choose $Y_i$ such that all rows of $M$ until $Y_{i+1}$ are contained in $Y_i$.

    We consider the DP cells $\zeta_i$ for each $i$ with the parameters $B_i$, $C_i$, and $T_i$.
    The block $B_0$ contains the rows of $Y_0$ and $Y_1$, and the columns one to the end of the first row of $\tau(B_0)$.
    For each $i > 0$, block $B_i$ contains the rows of $Y_i$ and $Y_{i+1}$, and the columns after those of $B_{i-1}$ to the end of the first row of $B_i$.

    If only a constant number of rows of $\sigma(M)$ are left, we can compute the partial solutions optimally and
    there are DP cells for exactly this purpose:
    there is a DP cell $\zeta_i$ such that the last $2/\varepsilon^5$ rows of $\tau(M)$ are located between $a_i$ and $c_i$ and $Y_i$ contains exactly these rows.
    As before, to keep a clean notation, in the following we implicitly assume that cells with constantly many rows of $\sigma(M)$ are handled separately.

    The chunks of $C_i$ are the ranges that equally distribute $\tau(B)$. The selection $T_i$ is the best possible selection as specified in $\SWC$.
    Analogously we define $B'_i$, $C'_i$, and $T'_i$ for $\zeta'_i$.

    We construct a solution SOL and inductively show that the value of each considered cell $(\zeta_i,\zeta'_j)$ and $(\zeta'_i,\zeta_j)$ is at most a factor $(1+O(\varepsilon))$ larger than the number of errors of an optimal solution restricted to the considered prefix and the considered rows. 
    Afterwards we show that our algorithm computes a solution at least as good as SOL.

    We first consider the DP cell $(\zeta_0,\zeta'_0)$.
    Recall that we assumed \WLOG that $i'_0 > i_0$.
    We apply Lemma~\ref{lem:swc-gap} with the parameters of the pair of cells to obtain the prefixes $\sigma_{\zeta_0}$ and $\sigma'_{\zeta'_0}$.
    The total number of errors within the columns of $M$ at the prefixes is therefore at most a factor $(1+O(\varepsilon))$ larger than in $(\tau,\tau')$.
    There are two possibilities for the subsequent steps with $i \ge 0$.

    We first assume that $b_{i+1}' > b_{i}$ and consider the cell $(\zeta_{i},\zeta'_{i+1})$.
    Then, similar to the proof of Lemma~\ref{lem:simpleDP}, we apply $\SWC$ to obtain the suffix of $({\sigma}_{\zeta_i,\zeta'_{i+1}},{\sigma'}_{\zeta_i,\zeta'_{i+1}})$ after $\en_{B_{i'}}$.
    By Lemma~\ref{lem:swc-gap}, considering the suffix alone we have at most a factor $(1+O(\varepsilon))$ more errors within these columns than $(\tau,\tau')$.

    Since $\zeta_i$ is a predecessor of $\zeta_{i+1}$, all newly assigned rows were not considered in $(\zeta_i,\zeta'_i)$.
    Note that $\zeta_i$ did not change. Even though we looked at the same chunks, we used the same selections and therefore did not change $\sigma_{\zeta_i,\zeta'_i}$.

    The second possibility is that $b_{i+1}' < b_{i}$ and we consider the cell $(\zeta_{i+1},\zeta'_{i})$. The instance is shown in Fig.~\ref{fig:DP-crux2}.
    We then apply $\SWC$ to obtain the suffix of $({\sigma}_{\zeta_{i+1},\zeta'_{i}},{\sigma'}_{\zeta_{i+1},\zeta'_{i}})$ after $\en_{B_{i}}$.
    We obtain a $(1+O(\varepsilon))$-approximation analogous to the case $b_{i+1}' > b_{i}$.
\end{proof}

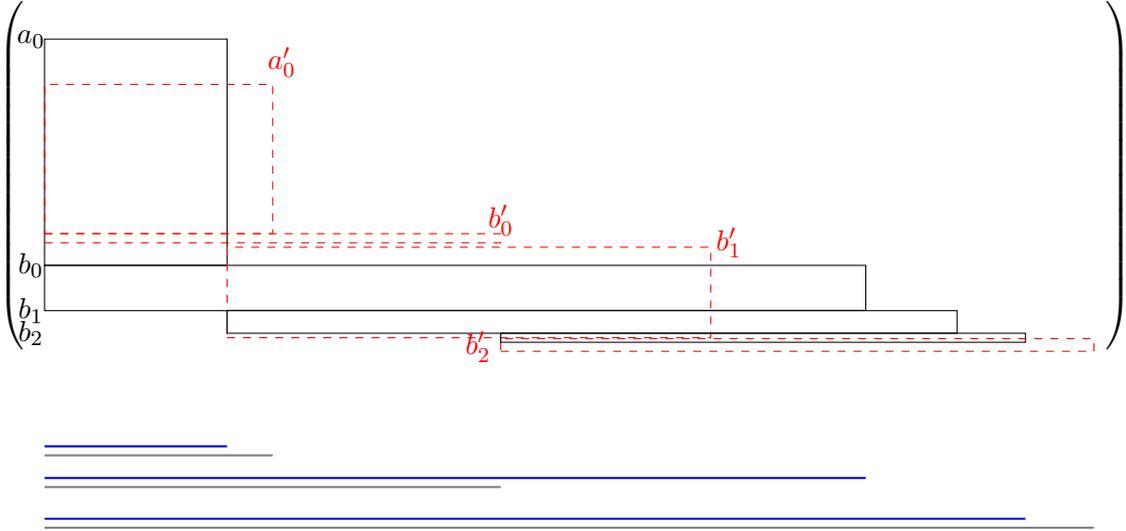
\begin{figure}[h]
    \begin{center}
        \begin{tikzpicture}[scale=0.6]
            \draw (-2,-2) rectangle (2,3);      
            \draw[red, dashed] (-2,2) rectangle (3,-1.3);
            \node at (-2.3,3) {$a_0$};
            \draw (-2,-3) rectangle (16,-2);     
            \node at (-2.3,-2) {$b_0$};
            \node at (-2.3,-3) {$b_1$};
            \node at (-2.3,-3.5) {$b_2$};
            \draw[red,dashed] (-2,-1.3) rectangle (8,-1.5);
            \draw (2,-3.5) rectangle (18,-3);
            \draw[red,dashed] (2,-1.6) rectangle (12.6,-3.6);
            \draw[red,dashed] (8,-3.62) rectangle (21,-3.9);
            \draw (8,-3.7) rectangle (19.5,-3.5); 
            \node[red] at (3.2,2.5){$a'_0$};
            \node[red] at (8,-1) {$b_0'$};
            \node[red] at (13,-1.5) {$b'_1$};
            \node at (-2.7,0){\(\left(\rule{0cm}{2.4cm}\right.\)};
            \node at (21.5,0){\(\left.\rule{0cm}{2.4cm}\right)\)};
            \draw (2,-3) to (2,-3.5);
            \node[red] at (7.5,-3.8) {$b'_2$};
            \draw[blue, thick] (-2,-6) -- (2,-6);
            \draw[gray, thick] (-2,-6.2) -- (3,-6.2);
            \draw[blue, thick] (-2,-6.7) -- (16,-6.7);
            \draw[gray, thick] (-2,-6.9) -- (8,-6.9);
            \draw[blue, thick] (-2,-7.6) -- (19.5,-7.6);
            \draw[gray, thick] (-2,-7.8) -- (21,-7.8);
        \end{tikzpicture}
        \caption{\label{fig:DP-crux2} Blocks of an instance $M$ in the DP for a pair of solution strings. The blue and gray lines represent $\sigma$ and $\sigma'$ respectively from first two iterations of DP. The sketch shows the switch example in the second iteration because $b_{1}' < b_{0}$.}
    \end{center}
\end{figure}

\section{Subinterval-free instances.}\label{sec:subinterval-free}
We show how to generalize the results of the previous section in order to handle instances where no interval of a string $s$ is a proper subinterval of a string $s'$ and thus show Theorem~\ref{thm:ptas}.
To this end, we first show how to handle the rooted version of sub-interval free instances, where there is one column $j$ such that each string of the instance crosses $j$.

We order the rows of a subinterval-free instance $M$ from top to bottom such that for each pair $i,i'$ of rows with the binary part of $i$ starting on the left of the binary part of $i'$, $i$ is above $i'$.
In other words, the binary strings are ordered from top to bottom with increasing starting position (\ie, column).
Observe that the sub-string freeness property ensures that the last binary entry of $i'$ is not on the left of the last binary entry of $i$.

\begin{lemma}\label{lem:second_instance_setting}
    Let $M$ be a $\GMEC$ instance such that no string is the substring of another string.
    Furthermore we assume that there is a column $j$ of $M$ such that each string of the instance crosses $j$.
    Then there is a PTAS for $M$.
\end{lemma}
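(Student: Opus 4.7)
The plan is to split the instance at the root column $j$ into two SWC-instances and solve them via a joint dynamic program that enforces a consistent row-to-string assignment across the two halves.

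Since every row of $M$ crosses column $j$, let $M_R$ be the submatrix on columns $j,\dots,m$ and let $M_L$ be the submatrix on columns $1,\dots,j$. Both halves have their shared column entirely binary: in $M_R$ this is the first column, and if we flip $M_L$ horizontally the same is true. Hence both halves are SWC-instances. Using the subinterval-free assumption, when rows are sorted by increasing starting column the ending columns are also weakly increasing; therefore, after the flip, each half can independently be put into the standard ordering of Definition~\ref{def:order-SWC} (for $M_L$ this requires reversing the row order so that the binary-part lengths increase from top to bottom).

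To combine the two halves I would extend the DP of Section~\ref{sec:second} to operate on joint cells of the form $((\zeta_L,\zeta_L'),(\zeta_R,\zeta_R'))$, where each pair is a joint cell in the sense of Definition~\ref{def:dp-cell-two-solution-string} and all four selections refer to the \emph{same} underlying row indices of $M$ (which is possible because $M_L$ and $M_R$ share the row set of $M$). Predecessor transitions advance exactly one of the four subcells at a time, in the spirit of Definition~\ref{def:predecessor-two}. The number of joint cells is at most the product of the numbers of single-side cells and therefore remains polynomial, and each transition can be evaluated in polynomial time by re-running $\SWC$ on the newly exposed strip, exactly as in the update phase of the pairwise DP.

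For the approximation guarantee I would mimic the correctness argument of Theorem~\ref{thm:column1} on each half. Given an optimal $(\tau,\tau')$, the two trisection-based DP trajectories (one on each side) are individually within $(1+O(\varepsilon))$ of optimum by Lemma~\ref{lem:swc-gap}, and summing over the two halves yields $(1+O(\varepsilon))\,\cost_M(\tau,\tau')$. The main obstacle is consistency: the row-to-string assignment chosen on the left must agree with that chosen on the right. Lemma~\ref{lem:swc-gap} says each side tolerates reassignment of the residual rows in its $X$-sets without harming the ratio, but the union of two independently chosen reassignments must still be a valid global assignment. I would handle this by forcing $T_L,T_L',T_R,T_R'$ to reference the same row indices inside every joint cell, so that the representative samples on both halves agree on which rows belong to $\sigma$ versus $\sigma'$; the residual rows on each side can then be re-assigned globally to match the optimal assignment, and a separate application of Lemma~\ref{lem:swc-gap} on each half produces the desired bound.
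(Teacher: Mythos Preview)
Your high-level plan---split at the root column into a left and a right SWC-instance and couple the two runs of the pairwise DP---is exactly the paper's starting point. The gap is in how you enforce consistency between the two halves.

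First, the phrase ``all four selections refer to the same underlying row indices'' cannot be realised as stated. After you put $M_R$ in standard order the rows are sorted by increasing $\en_i$; after you flip $M_L$ and put it in standard order the rows are sorted by \emph{decreasing} $\st_i$, which by subinterval-freeness is the \emph{reverse} of the $M_R$ order. Consequently the first block of the right DP (its $U\cup L$) sits at one end of the row ordering while the first block of the left DP sits at the opposite end; the selections $T_L,T_R$ come from disjoint row ranges and cannot be ``the same indices.'' At best you can require $(T_L\cup T_R)\cap(T'_L\cup T'_R)=\emptyset$, but that is weaker than what your argument needs.

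Second, even with disjointness, Lemma~\ref{lem:swc-gap} does not let you pass to the global $\tau$-assignment on both sides. That lemma only permits reassigning the rows in $X$ (resp.\ $X'$) to their $\tau$-side; reassigning \emph{all} rows to the $\tau$-partition can blow up the cost by a $\Theta(1/\varepsilon)$ factor on columns where $\sigma'\neq\tau'$ (this is why the proof of Lemma~\ref{lem:swc-gap} explicitly uses $|\tau'(X')|\le\varepsilon^2|\tau'(M)|$). When you combine the two halves you must exhibit one global assignment whose cost on each half is $(1+O(\varepsilon))$ of that half's optimum, and neither ``min-cost on each side'' nor ``$\tau$'s assignment everywhere'' gives this with your coupling.

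The missing idea is the paper's \emph{center cell}. Instead of two independent first blocks, one defines a single cell with a shared middle range $R_U$ (containing a $(1{-}\varepsilon)$-fraction of $\tau(M)$) together with two outer ranges $\overleftarrow{R}_L$ and $\overrightarrow{R}_L$, one for each direction. The selections for $\sigma$ and $\sigma'$ are drawn once from this common center, so both the leftward and the rightward DP start from the \emph{same} representative rows; subinterval-freeness then guarantees that any row outside $R_U$ that crosses a right-hand column is dominated by $R_U$ there (and symmetrically on the left), which is precisely what lets Lemma~\ref{lem:swc-gap} absorb those rows on each side. With this shared interface the two DPs can indeed be run independently from the center, and the two $(1+O(\varepsilon))$ bounds add up without any inconsistency in the row-to-string assignment.
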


\begin{proof}
    Let $s$ and $t$ be the first and the last row of $M$.
    The column $j$ determines a block $W$ of $M$ that spans all rows and the columns from the first binary entry of $t$, $j_t$, to the last binary entry of $s$, $j_s$.
    In particular, $W$ has only binary entries.

    The right hand side of $j_t$ (the submatrix of $M$ composed of all columns with index at least $j_t$) forms a $\GMEC$ instance as required in Theorem~\ref{thm:column1}.
    The submatrix of $M$ that contains all rows of $M$ and columns $1$ to $j_s$ forms a \GMEC instance as required in Theorem~\ref{thm:column1} if we invert both the order of the rows and the columns.
    Instead of changing the ordering of the matrix, we can run the algorithm from right to left and from bottom to top.

    We would like to apply Theorem~\ref{thm:column1} independently to the two specified sub-problems.
    To this end we define a special set of DP cells $\gamma$ with cells $(\zeta_W,\zeta'_W) \in \gamma$.
    The content of these cells is similar to the regular cells, but it contains the information for both sides simultaneously.
    More precisely, a cell $\zeta_W$ has the following entrees (see also Figures~\ref{fig:second_instance1} and \ref{fig:second_instance2}).

    (a) Three consecutive ranges of rows determined by numbers $1\le \overleftarrow{c} < \overleftarrow{b} < \overrightarrow{b} < \overrightarrow{c} \le n$.
    These numbers determine an upper range $R_U$ from row $\overleftarrow{b}+1$ to row $\overrightarrow{b}-1$ and the following further ranges.
    A left lower range $\overleftarrow{R}_L$ from row $\overleftarrow{b}$ to row $\overleftarrow{c}+1$,
    as well as a right lower range $\overrightarrow{R}_L$ from row $\overrightarrow{b}$ to row $\overrightarrow{c}-1$.
    (b) A separation into chunks $C$. There are $3/\varepsilon^2$ chunks in $C$: $1/\varepsilon^2$ for $R_U$,  $1/\varepsilon^2$ for $\overrightarrow{R}_L$, and  $1/\varepsilon^2$ for $\overleftarrow{R}_L$.
    (c) A selection $T$ of $3/\varepsilon^5$ rows (with repetition): $1/\varepsilon^2$ for each chunk.

    We analogously obtain $\zeta'_W$ with the same variables but marked with the symbol prime.
    The rows selected in $\zeta'_W$ are required to be disjoint from those in $\zeta_W$, i.e., $T \cap T' = \emptyset$.
    Also the boundaries of chunks in $\zeta_W$ and $\zeta'_W$ have to be disjoint.

    \begin{definition}[Center cells]
        The cells $(\zeta_W,\zeta'_W) \in \gamma$ are called \emph{center cells}.
        \label{def:center-cells}
    \end{definition}
    The reason is that they take a special role as common ``centers'' of two separate runs of the DP: one run to the left and one run to the right.
    Observe that for each feasible entry of $(\zeta_W,\zeta'_W)$, we can apply Theorem~\ref{thm:column1} independently to the left and to the right, since the DP cells $(\zeta_W,\zeta'_W)$ takes the role of the left-most cell in Theorem~\ref{thm:column1}.
    The strings only overlap between the columns $j_t,j_s$ where we obtain an instance of \BMEC, which in particular is a good SWC-instance.
    Note that for each column $\hat{j}$ on the right hand side of $j_t$, all rows of $W$ located above $\overleftarrow{b}$ with binary entry at column $\hat{j}$  have also a binary entry at all rows between $\overleftarrow{b}$ and $\overrightarrow{b}$, due to the subinterval-freeness. 
    The properties of $\hat{j}$ on the left hand side of $j_s$ are analogous.
    We will choose $\overleftarrow{b}$ and $\overrightarrow{b}$ in such a way that by Lemma~\ref{lem:swc-gap}, it is therefore sufficient to consider the rows between $\overleftarrow{b}$ and $\overrightarrow{b}$ in order to handle all rows crossing $j$.

    None of the remaining steps from Section~\ref{sec:second} interfere with each other.
    We therefore run the following $DP$.
    We first compute all center cells $(\zeta_W,\zeta'_W) \in \gamma$.
    For each cell, we store an infix of $\sigma$ and an infix of $\sigma'$.
    The infix of $\sigma$ starts at $j_t$ and ends at $j_s$.
    The entries of the two strings are those that we obtain from $\SWC$ with the parameters of $(\zeta_W,\zeta'_W)$.
    Each cell $(\zeta_W,\zeta'_W)$ forms a starting point for Algorithm $\textsc{SWC}^{\sigma, \sigma'}$, applied independently towards the left hand side and the right hand side.

    To see that the DP yields a good enough approximation, again we compare against an optimal solution $(\tau,\tau')$.
    Clearly we get a $(1+O(\varepsilon))$-approximation for the infix between column $j_t$ and $j_s$ if for a DP cell $(\zeta_W,\zeta'_W)$, by Lemma~\ref{lem:swc-gap}.
    Note that the computed solution does not consider the rows above $\overleftarrow{c}$ or below $\overrightarrow{c}$.
    Since the further processing respects our choice between $\overleftarrow{c}$ and $\overrightarrow{c}$, the claim follows from Theorem~\ref{thm:column1}. 
\end{proof}

\subparagraph{General sub-interval-free instances} 
We use Lemma~\ref{lem:second_instance_setting} to handle general sub-interval free instances.
Instead of a single column $j$ crossed by all strings, we determine a sequence $q = (q_1,q_2,\dotsc)$ of columns with the property that each string crosses exactly one of them.
Let $s_1$ be the first string in $M$.
Then we choose $q_1$ to be the column of the last entry of $s_1$.

We recursively specify the remaining columns.
For a given $j$ such that we know $q_j$, let $s_i$ be the last (\ie, bottom-most) string that crosses $q_j$.
Then we choose $q_{j+1}$ to be the last (\ie, rightmost) column of string $s_{i+1}$.
For each $q_i$ in the sequence $q$, we determine a block $W_i$ analogous to $W$ in Lemma~\ref{lem:second_instance_setting}.

A simple induction shows that by the no-substring property and the chosen order of strings, each string crosses at least one column of $q$ and none of them crosses more than one.
In particular, for each $j$, the solution on the left hand side of $q_j$ depends on rows of $M$ disjoint from the rows that determine the solution on the right hand side of $q_{j+1}$. 

In order to combine the solution on the right hand side of $q_j$ with the solution on the left hand side of $q_{j+1}$, we introduce a notion of dominance.
Let us consider two arbitrary submatrices $V_1$ and $V_2$ of $M$.
\begin{definition}[Dominance]
    \label{def:dominance}
    We say that $V_1$ $\tau$-dominates $V_2$ if for each column $c$ that is in both $V_1$ and $V_2$, either at least one of the two matrices has no binary entries or
    the number of binary entries in $\tau(V_1)$ is at least $1/\varepsilon^2$ times the number in $\tau(V_2)$.
    We say that $V_1$ is $\tau$-dominant over $V_2$ for a column $c$, if the one column submatrix of $V_1$ determined by $c$ dominates $V_2$. 
    
    We analogously define $\tau'$-dominance.
\end{definition}

Consider a submatrix $\overrightarrow{V}$ of $M$ that only contains rows that cross $q_i$ and a submatrix $\overleftarrow{V}$ of $M$ that only contains rows that cross $q_{i+1}$.
We observe that if $\overrightarrow{V}$ is $\tau$-dominant over $\overleftarrow{V}$ for some column $c$, it is also $\tau$-dominant for all columns on the left hand side of $c$:
until $q_i$ is reached, when moving to the left the number of binary entries of $\tau(\overrightarrow{V})$ increases and the number of binary entries of $\tau(\overleftarrow{V})$ decreases.
Analogously, if $\overleftarrow{V}$ is $\tau$-dominant over $\overrightarrow{V}$ for some column $c$, it is also $\tau$-dominant for all columns on the right hand side of $c$.

We therefore have a possibly empty interval $I$ without $\tau$-dominance such that the columns of $\overrightarrow{V}$ on the left hand side of $I$ are $\tau$-dominant and the columns of $\overleftarrow{V}$ on the right hand side of $I$ are $\tau$-dominant.
(See also Figures~\ref{fig:second_instance1} and \ref{fig:second_instance2}.)

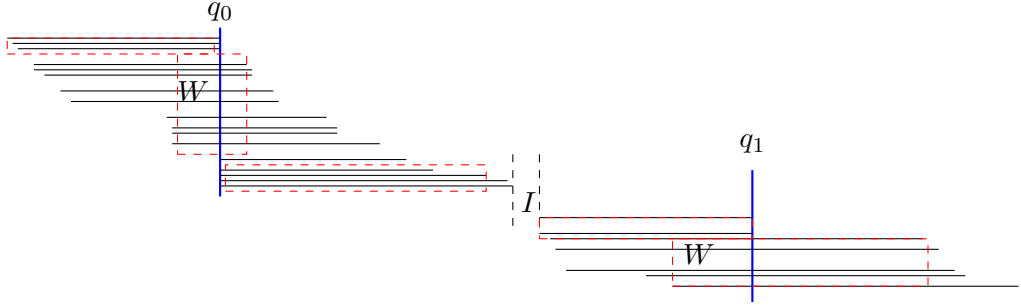
\begin{figure}[h]
    \begin{center}
        \begin{tikzpicture}[scale=0.7]
            \draw (0,0) -- (4,0);
            \draw (0.1,-.1) -- (4,-.1);
            \draw (0.2,-.2) -- (4,-.2);
            \draw[blue,thick] (4,.2) -- (4,-3);
            \node at (4,.5) {$q_0$};
            \node at (3.5,-1) {$W$};
            \draw (0.5,-.5) -- (4.5,-0.5);
            \draw (0.5,-.6) -- (4.6,-0.6);
            \draw (0.7,-.7) -- (4.6,-0.7);
            \draw (1,-1) -- (5,-1);
            \draw (1.2,-1.2) -- (5.1,-1.2);
            \draw (3,-1.5) -- (6,-1.5);
            \draw (3.1,-1.7) -- (6.2,-1.7);
            \draw (3.1,-1.8) -- (6.2,-1.8);
            \draw (3.1,-2) -- (7,-2);
            \draw (4,-2.3) -- (7.5,-2.3);
            \draw (4,-2.5) -- (8,-2.5);
            \draw (4,-2.6) -- (9,-2.6);
            \draw (4,-2.7) -- (9.4,-2.7);
            \draw (4,-2.8) -- (9.5,-2.8);
            \draw[blue,thick] (14,-2.5) -- (14,-5);
            \node at (14, -2) {$q_1$};
            \draw (10,-3.4) -- (14,-3.4);
            \draw[dashed] (9.5,-2.2) to (9.5,-3.6);
            \draw[dashed] (10,-2.2) to (10,-3.6);
            \draw (10,-3.7) -- (14,-3.7);
            \draw (10.2,-3.8) -- (17.2,-3.8);
            \draw (10.3,-4) -- (17.5,-4);
            \draw (10.5,-4.4) -- (17.8,-4.4);
            \draw (12,-4.5) -- (18,-4.5);
            \draw (12.5,-4.7) -- (19,-4.7);
            \node at (9.8,-3.1) {$I$};
            \node at (13,-4.1) {$W$};
            \draw[red, dashed] (0,0) rectangle (3.89,-.3);
            \draw[red, dashed] (3.2,-.3) rectangle (4.5,-2.2);
            \draw[red, dashed] (10,-3.4) rectangle (14,-3.8);
            \draw[red, dashed] (12.5,-3.8) rectangle (17.3,-4.7);
            \draw[red, dashed] (4.1,-2.4) rectangle (9,-2.9);
        \end{tikzpicture}
        \caption{\label{fig:second_instance1} Blocks represented by ranges shown in red on an instance $M$ and the blue lines are the columns, $I$ and $W$ shows the empty interval and central region respectively.}
    \end{center}
\end{figure}

\begin{figure}[h]
    \begin{center}
        \begin{tikzpicture}[scale=0.7]
            \draw (0,0) -- (4,0);
            \draw (0.1,-.1) -- (4,-.1);
            \draw (0.2,-.2) -- (4,-.2);
            \draw[blue,thick] (4,.2) -- (4,-3);
            \node at (4,.5) {$q_0$};
            \node at (3.5,-1) {$W$};
            \draw (0.5,-.5) -- (4.5,-0.5);
            \draw (0.5,-.6) -- (4.6,-0.6);
            \draw (0.7,-.7) -- (4.6,-0.7);
            \draw (1,-1) -- (5,-1);
            \draw (1.2,-1.2) -- (5.1,-1.2);
            \draw (3,-1.5) -- (6,-1.5);
            \draw (3.1,-1.7) -- (6.2,-1.7);
            \draw (3.1,-1.8) -- (6.2,-1.8);
            \draw (3.1,-2) -- (7,-2);
            \draw (4,-2.3) -- (7.5,-2.3);
            \draw (4,-2.5) -- (8,-2.5);
            \draw (4,-2.6) -- (9,-2.6);
            \draw (4,-2.7) -- (9.4,-2.7);
            \draw (4,-2.8) -- (9.5,-2.8);
            \draw[blue,thick] (11,-2.5) -- (11,-5);
            \node at (11, -2) {$q_1$};
            \draw (5,-3) -- (11,-3);
            \draw (5.1,-3.2) -- (11.2,-3.2);
            \draw (5.2,-3.3) -- (11.2,-3.3);
            \draw (5.5,-3.4) -- (11.3,-3.4);
            \draw[dashed] (5.5,-2.2) to (5.5,-3.6);
            \draw[dashed] (9,-2.2) to (9,-3.6);
            \draw (8,-3.7) -- (12,-3.7);
            \draw (8.2,-3.8) -- (12.2,-3.8);
            \draw (8.3,-4) -- (12.5,-4);
            \draw (8.5,-4.4) -- (12.8,-4.4);
            \draw (10,-4.5) -- (13,-4.5);
            \draw (10.5,-4.7) -- (14,-4.7);
            \node at (7,-3.1) {$I$};
            \node at (10.8,-4.1) {$W$};
            \draw[red, dashed] (0,0) rectangle (3.89,-.3);
            \draw[red, dashed] (3.2,-.3) rectangle (4.5,-2.2);
            \draw[red, dashed] (10.5,-3.1) rectangle (11.2,-4.7);
        \end{tikzpicture}
        \caption{\label{fig:second_instance2} This sketch shows a non-dominance example in region $I$.}
    \end{center}
\end{figure}
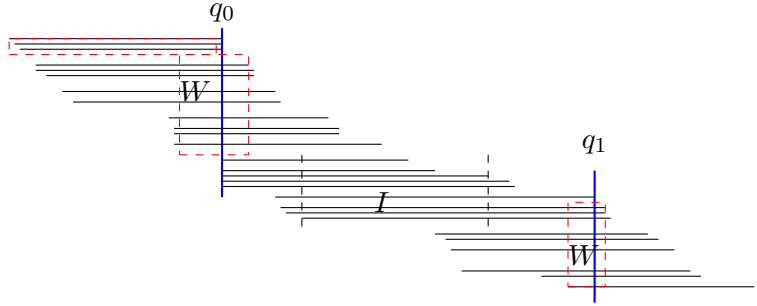

\begin{definition}[Dominance region]
    The \emph{dominance region} of $\overrightarrow{V}$  with respect to $\overleftarrow{V}$ is the set of columns where $\overrightarrow{V}$ is dominant over $\overleftarrow{V}$, and vice versa. 
    \label{def:dominance-region}
\end{definition}
Within the dominance region, our old DP can simply compute solutions without considering interferences: the dominated set of rows is small enough to be ignored, applying Lemma~\ref{lem:swc-gap}.

Within the interval $I$, the DP cells on both sides of $I$ have to ``cooperate.''
We obtain a \BMEC block in the middle with additional rows on the top and bottom. 
This sub-instance can be solved directly.

We use DP cells similar to Lemma~\ref{lem:second_instance_setting}, but for more than one center.
For each $j$, we consider column $q_j\in q$ and a collection $\kappa_j$ of DP cells $(\zeta_{q_j},\zeta'_{q_j}) \in \kappa_j$.
Each cell $(\zeta_{q_j},\zeta'_{q_j})$ is a center cell with center $q_j$.
We refer to the cells in $\kappa_j$ as the $j$th center cells.

Additionally, for each center cell we also store the dominance information on the left and right of $q_j$, 
i.e., we store the intervals $\overleftarrow{I}, \overleftarrow{I}'$ between $q_{j-1}$ and $q_j$ and the intervals $\overrightarrow{I},\overrightarrow{I}'$ between $q_j$ and $q_{j+1}$ where no cell dominates another, once with respect to $\tau$ and  once with respect to $\tau'$.

Formally this means to extend the cells by four numbers that store the start and end points four intervals $\overleftarrow{I},\overrightarrow{I}',\overrightarrow{I}$, and $\overrightarrow{I}'$.

For each of the four intervals we store additional information.
The four intervals only differ in whether we consider $\sigma$ or $\sigma'$. The left and right version are symmetric.
Therefore it is sufficient to analyze the details for a generic $I \in \overleftarrow{I},\overrightarrow{I}',\overrightarrow{I},\overrightarrow{I}'$.
The interval ${I}$ determines a block ${B}$ that we subdivide into chunks ${C}$ and we select rows ${T}$.
There are several differences to previous trisections, subdivisions and selections.

We divide the rows of block ${B}$ into four regions: 
a middle part ${U^\uparrow}$ that has only binary entries (a \BMEC sub-instance) such that each row crosses $q_j$,
a middle part ${U^\downarrow}$ that has only binary entries such that each row crosses $q_{j+1}$,
the rows ${U}^{\uparrow\uparrow}$ above ${U}^\uparrow$, and the rows ${U}^{\downarrow\downarrow}$ below ${U}^\downarrow$.
We choose the two middle parts such that the number of rows is maximal.

It is not sufficient to use a globally guessed $r$. 
Instead, we add four numbers ${r}^{\uparrow\uparrow},{r}^\uparrow,{r}^\downarrow,{r}^{\downarrow\downarrow}$ to the DP cell in order to guess and store the values 
$\tau({U}^\uparrow)$, $\tau({U^\downarrow})$, $\tau({U^{\uparrow\uparrow}})$, and $\tau({U^{\downarrow\downarrow}})$.

Due to the non-dominance, we know that for each column of ${B}$, at least an $\varepsilon^2$-fraction of rows from $\tau(B)$ are located in the middle part ${M}$.
Observe that there is no region that takes the role of $X$ in a trisection.
We obtain an instance similar to a good SWC-instance, but it has two non-binary regions and the binary region only has an $\varepsilon^2$ fraction of rows instead of an $\varepsilon$-fraction.
To be able to still apply Lemma~\ref{lem:swc-gap}, we subdivide each of the three regions into chunks and increase the number of chunks per region. 
The number of chunks depends on  the four versions of ${r}$. 
If our chunks do not contain more than $\varepsilon^4 {r}$ rows of $\tau(M)$, the lemma is applicable.
We guess a number $k$ and set the size of chunks with root $q_j$ to contain $\varepsilon^{4k} {r}^{\uparrow\uparrow}$ rows of $\tau(M)$ in ${U}^\uparrow$ and $\varepsilon^{4k} {r}^{\uparrow\uparrow}$ in ${M}$ for the rows with root $q_j$.
There may be an additional chunk with fewer rows, if the numbers don't match.
We specify the remaining chunks symmetrically, based on a number guessed for root $q_{j+1}$.
The choice of $k$ will become clear in the description of the DP.

The increased precision also requires that we  increase the precision of the entire remaining DP: we replace each selection of $1/\varepsilon^2$ chunks into selections of $1/\varepsilon^4$ chunks.
Clearly, the increased precision cannot decrease the quality of the computed solution.

The idea of the DP is that for each $q_j$, we run the rooted DP as an \emph{inner} DP that determines solutions for their dominance regions that fit to solutions in the consecutive non-dominance regions.
The non-dominance regions then form interfaces that we can use to compute an overall solution from left to right with an \emph{outer} DP.

The inner DP works as follows.
For each cell $(\zeta_{q_1},\zeta'_{q_1}) \in \kappa_1$, we compute the prefixes of $\sigma,\sigma'$ until $q_1$ exactly as in Lemma~\ref{lem:second_instance_setting}.
We start the DP to the right hand side also the same way as before, but with the difference that as soon as we reach the row ranges for $\overrightarrow{I}$ or $\overrightarrow{I}'$, we use the choices already stored in $(\zeta_{q_1},\zeta'{q_1})$.
We have to ensure that our choices within the DP do not contradict the choices of  $(\zeta_{q_1},\zeta'_{q_1}) \in \kappa_1$. 
If $(\zeta_{q_1,i},\zeta'_{q_1,i})$ is the cell of the inner DP that overlaps with $\overrightarrow{I}$ first, we require that among the common rows, $\overrightarrow{B}$ contains the remaining rows from $\tau(M)$ restricted to the rows of the inner DP and does not contradict $B_i$. 
Each chunk of $C_i$ contains $\varepsilon^k |\tau(M)|$ rows of $\tau(M)$, for some integer $k$ (the same $k$ that we guessed for the non-domination region). 
We have to ensure that the chunks of $\overrightarrow{U}^\uparrow$ and $\overrightarrow{M}^\uparrow$ match the chunks and the chunks of $\overrightarrow{C}_i$.
Furthermore, we have to check that the selection of rows matches.

For all $j > 1$ continue in the same manner starting from $(\zeta_{q_j},\zeta'_{q_j})$ and handle the processing of $\overleftarrow{I}$ as we did before with $\overrightarrow{I}$.
Observe that we can see the processing of $(\zeta_{q_1},\zeta'_{q_1})$ as a special case with empty interval $\overleftarrow{I}$, and to obtain the suffix of $\sigma,\sigma'$, the last interval $\overrightarrow{I}$ can be handled as empty interval.

The global DP proceeds from left to right. For each $q_j$, it considers all cells $(\zeta_{q_j},\zeta'_{q_j})$.
The value of $(\zeta_{q_j},\zeta'_{q_j})$ is its inner DP value plus the best value achievable on the left hand side with the same choice of parameters for the left non-domination region.
Among all cells from $\kappa_j$ with the same parameters for the right non-domination region, the global DP only keeps the best value (the smallest number of errors).

\subparagraph{The above DP is a PTAS for $\mathbf{M}$.} 
Let $(\tau,\tau')$ be an optimal solution.
For each separate $q_j \in q$, we run the same DP as in Lemma~\ref{lem:second_instance_setting} and thus we obtain a $(1+O(\varepsilon))$-approximation.
For the intervals $\overleftarrow{I}$ and $\overrightarrow{I}$, there is a choice of parameters that matches the choices analyzed in Lemma~\ref{lem:second_instance_setting}.
We therefore only have to argue that the transition between sub-instances works correctly.
We consider the dominant regions determined by $(\tau,\tau')$ and consider the DP cells that guess these regions correctly from left to right.
Let $I$ be one of the guessed non-dominant regions.
We obtain the solution for $I$ by applying $\SWC$, which gives a $(1+O(\varepsilon))$ approximation.
The transition between dominant and non-dominant regions uses that in both cases we create the solution strings from the same parameters in $\SWC$ and therefore creates the solution from the same instance strings.
This finishes the proof of Theorem~\ref{thm:ptas}.

\section{A QPTAS for general instances.}\label{sec:QPTAS}

To solve the general instances, the main observation is that we divide the rows into their at most $\log_2(m)$ length classes $\Lambda_i$, and the $i$th length class $\Lambda_i$ is the set of all strings of length $\ell$ with $\ell \in (m/2^{i+1}, m/2^{i}]$.
First we present an algorithm to solve each length class $\Lambda_i$ separately by constructing their corresponding columns.

\subsection{Length classes.}\label{sec:length-classes}
We show how we can handle length classes of strings.
To this end, let us assume \WLOG that $m$ (\ie, the number of columns in $M$) is a power of $2$.
Then for each $i \ge 0$, the $i$th length class $\Lambda_i$ is the set of all strings of length $\ell$ with $\ell \in (m/2^{i+1}, m/2^{i}]$. 
We observe the following known property of length classes.
\begin{lemma}\label{lem:half}
    For each $i \ge 0$ there is a set $q_i = \{q_{i,1},q_{i,2},\dotsc\}$ of columns such that (a) each string in $\Lambda_i$ crosses at least one column from $q_i$ and (b) no string from $\Lambda_i$ crosses more than two columns from $q_i$.
    Furthermore, we can choose the sets such that $q_i \subseteq q_{i+1}$.
\end{lemma}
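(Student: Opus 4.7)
The plan is to construct each $q_i$ explicitly as an evenly spaced arithmetic progression of column indices, and then verify the three required properties by elementary counting. Since the paper assumes \WLOG that $m$ is a power of two, the obvious choice is to space the columns of $q_i$ by $s_i := m/2^{i+1}$, which is a positive integer.

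Concretely, I would define
\[
    q_i := \{k \cdot s_i : 1 \le k \le 2^{i+1}\} = \{s_i, 2 s_i, \ldots, m\},
\]
i.e., all positive multiples of $s_i$ up to $m$. This gives exactly $2^{i+1}$ columns, which is the right order of magnitude since length class $\Lambda_i$ consists of strings of length in $(m/2^{i+1}, m/2^i]$.

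For property~(a), I would use that a string in $\Lambda_i$ has binary part of length $\ell > m/2^{i+1} = s_i$. Since consecutive elements of $q_i$ are exactly $s_i$ apart, any window of more than $s_i$ consecutive columns necessarily contains at least one multiple of $s_i$; formally, if the binary part starts at column $a = k s_i + r$ with $0 \le r < s_i$, then $(k+1) s_i - a \le s_i - 1 < \ell$, so the next multiple of $s_i$ after $a$ lies inside the binary part. For property~(b), a string in $\Lambda_i$ has length $\ell \le m/2^i = 2 s_i$, and any window of $\ell \le 2 s_i$ consecutive integers contains at most $\lceil 2 s_i / s_i \rceil = 2$ multiples of $s_i$. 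For the nesting $q_i \subseteq q_{i+1}$, the key observation is that $s_{i+1} = s_i/2$, so every multiple of $s_i$ is automatically a multiple of $s_{i+1}$.

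I do not anticipate any real obstacle: the statement is a purely combinatorial fact about arithmetic progressions. The only minor point to check is the extreme case $s_i = 1$ (when $i = \log_2 m - 1$), where $q_i$ contains every column and both properties become immediate; and to double-check that the boundary behavior is correct at $\ell = 2 s_i$, where a string of length exactly $2 s_i$ can cross up to but no more than two columns of $q_i$.
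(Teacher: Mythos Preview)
Your proposal is correct and follows essentially the same approach as the paper: the paper also defines $q_i$ as the set of columns $\{k \cdot m/2^{i+1} : 1 \le k \le 2^{i+1}\}$ and verifies (a), (b), and the nesting $q_i \subseteq q_{i+1}$ by the same elementary spacing arguments. Your write-up is in fact slightly more careful than the paper's (you make the divisibility argument for $q_i \subseteq q_{i+1}$ explicit and handle the boundary case $\ell = 2s_i$), but the underlying construction and reasoning are identical.
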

\begin{proof}
    At level $i$, for each $k$ with $1 \le k \le 2^{i+1}$ we select the column with index $k \cdot m/2^{i+1}$.
    We observe that the distance between two consecutive columns from $q_i$ is $m/2^{i+1}$, which matches the shortest length of strings in $\Lambda_i$: 
    if a minimal string starts right after a column of $q_i$, its last entry will cross the next column of $q_i$. 

    Since strings do not start before column $1$ and column $m$ is contained in each $q_i$, claim (a) follows.
    To see (b), observe that a maximum length string of $\Lambda_i$ is at most $m/2^i$.
    Let $j$ be an index. 
    The number of columns from $q_{i,j}$ to the column right before $q_{i,j+1}$ and from $q_{i,j+1}$ to right before $q_{i+2}$ are exactly $m/2^{i+1}$ .
    If the string starts directly at a column $q_{i,j}$ from $q_i$, it would cross column $q_{i,j+1}$ and end right before column $q_{i,j+2}$.
    
    The last claimed property follows directly from the construction of the sets $q_i$. 
    (See also Fig.~\ref{fig:gen_instance1}).
\end{proof}

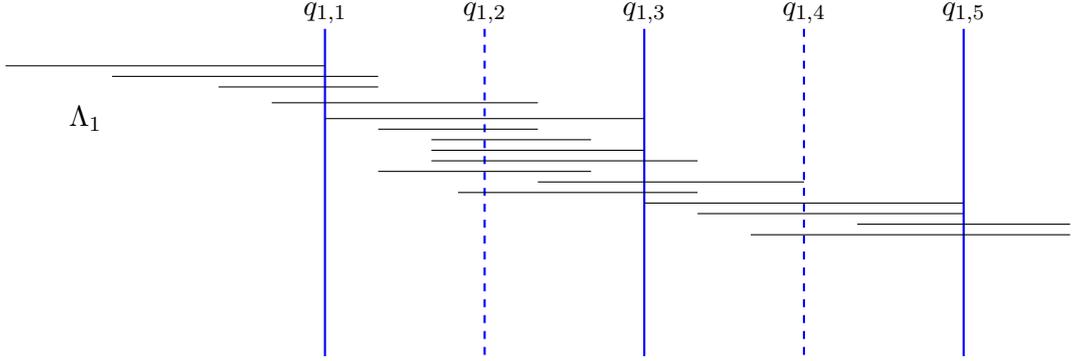
\begin{figure}[h]
    \begin{center}
        \begin{tikzpicture}[scale=0.7]
            \draw[blue,thick] (7,.2) -- (7,-6);
            \draw[blue,thick] (13,.2) -- (13,-6);
            \draw[blue,thick] (19,.2) -- (19,-6);
            \draw[blue,thick,dashed] (10,.2) -- (10,-6);
            \draw[blue,thick,dashed] (16,.2) -- (16,-6);
            \node at (2.5,-1.5) {$\Lambda_1$};
            \node at (7,0.5) {$q_{1,1}$};
            \node at (10,0.5) {$q_{1,2}$};
            \node at (13,0.5) {$q_{1,3}$};
            \node at (16,0.5) {$q_{1,4}$};
            \node at (19,0.5) {$q_{1,5}$};
            \draw (1,-0.5) -- (7,-0.5);
            \draw (3,-0.7) -- (8,-0.7);
            \draw (5,-0.9) -- (8,-0.9);
            \draw (6,-1.2) -- (11,-1.2);
            \draw (7,-1.5) -- (13,-1.5);
            \draw (8,-1.7) -- (11,-1.7);
            \draw (9,-1.9) -- (12,-1.9);
            \draw (9,-2.1) -- (13,-2.1);
            \draw (9,-2.3) -- (14,-2.3);
            \draw (8,-2.5) -- (12,-2.5);
            \draw (9.5,-2.9) -- (14,-2.9);
            \draw (11,-2.7) -- (16,-2.7);
            \draw (13,-3.1) -- (19,-3.1);
            \draw (14,-3.3) -- (19,-3.3);
            \draw (17,-3.5) -- (21,-3.5);
            \draw (15,-3.7) -- (21,-3.7);
        \end{tikzpicture}
        \caption{\label{fig:gen_instance1} For a single-length-class instance, the sketch shows the strings crossing each column either exactly once or exactly twice.}
    \end{center}
\end{figure}

For each $i$, we now separate $\Lambda_i$ into two sub-instances.
One sub-instance $\Lambda'_i$ is formed by those rows from $\Lambda_i$ that only cross one column of $q_i$ and the second sub-instance $\Lambda''_i$ is formed by those rows that cross exactly two columns of $\Lambda_i$.

\begin{definition}[DP for a length class $\Lambda_i$]
    For each index $j$ let $\xi'_j$ be the sets of DP cells for $\Lambda'_i$ and for the odd indices $j$ let $\xi''_j$ be the set of cells for $\Lambda''_i$.
    We define a super-cell that starts in $j$, $(Z'_j,Z''_j,Z'_{j+1},Z''_{j+2}) \in \xi'_j \times \xi''_j \times \xi'_{j+1} \times \xi''_{j+2}$ and the super-cell that ends in $j$, $(Z'_{j-1},Z''_{j-2},Z'_j,Z''_j) \in \xi'_{j-1} \times \xi''_{j-2} \times \xi'_{j} \times \xi''_{j}$.
    \label{def:dp-whole-length-class}
\end{definition}

\begin{lemma}\label{lem:length-class}
    There is a QPTAS for \GMEC if all strings are in the same class $\Lambda_i$.
\end{lemma}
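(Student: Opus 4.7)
My plan is to exploit the structure given by Lemma~\ref{lem:half} in order to decompose $\Lambda_i$ into rooted sub-instances of the two kinds $\Lambda'_i$ and $\Lambda''_i$ and then to combine them through a two-track dynamic program built on the super-cells of Definition~\ref{def:dp-whole-length-class}. First I would apply Lemma~\ref{lem:half} to obtain the column set $q_i$ such that every row of $\Lambda_i$ crosses at least one and at most two columns of $q_i$, and split $\Lambda_i$ into $\Lambda'_i$ (exactly one crossing) and $\Lambda''_i$ (exactly two consecutive crossings). In $\Lambda'_i$ every row is naturally attached to the unique column of $q_i$ it crosses; in $\Lambda''_i$, keeping only the odd-indexed columns of $q_i$ as roots ensures that every row crosses exactly one retained column. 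In both cases, this yields one rooted sub-instance per retained column.

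Next I would solve the two tracks separately as if they were independent. Around its root $q$, each rooted sub-instance can be arranged in the standard ordering of Definition~\ref{def:order-SWC} so that its left and right halves form SWC-instances, and the rooted piece can be approximated in the style of Lemma~\ref{lem:second_instance_setting}: a center cell anchored at $q$ together with two calls of the PTAS $\textsc{SWC}^{\sigma,\sigma'}$ of Theorem~\ref{thm:column1} produces a local $(1+O(\varepsilon))$-approximation. Adjacent rooted sub-instances along the length class are then stitched together from left to right by re-using the dominance DP of Section~\ref{sec:subinterval-free}: in a dominance region Lemma~\ref{lem:swc-gap} allows us to safely ignore the dominated side, and inside every non-dominance interval the two sides cooperate through matched chunks and selections. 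Carried out separately on $\Lambda'_i$ and $\Lambda''_i$, this produces two inner DPs that each compute a $(1+O(\varepsilon))$-approximation on their own track.

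The main obstacle is that $\Lambda'_i$ and $\Lambda''_i$ must agree on the same pair $(\sigma,\sigma')$, so the two tracks cannot be solved in isolation: a row of $\Lambda'_i$ and a row of $\Lambda''_i$ may cross the same column, and the roots of one track fall between the roots of the other. I would synchronize them with an outer DP on the super-cells of Definition~\ref{def:dp-whole-length-class}: a state remembers four single-track cells $(Z'_j, Z''_j, Z'_{j+1}, Z''_{j+2})$ covering the overlapping window in which rows of $\Lambda''_i$ rooted at $q_{i,j}$ interact with rows of $\Lambda'_i$ rooted at $q_{i,j}$, $q_{i,j+1}$ and $q_{i,j+2}$. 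A transition from a super-cell ending at $j$ to one starting at $j$ is permitted only when the two shared single-track cells coincide and when the chunk boundaries and selections chosen for the two tracks are compatible on the rows and columns they share; the value of a super-cell accumulates the errors contributed by both tracks in the columns it covers.

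The hard parts will be correctness and the running-time bound. For correctness, I would compare the computed solution with an optimal pair $(\tau,\tau')$ in the spirit of the proofs of Theorem~\ref{thm:column1} and Lemma~\ref{lem:second_instance_setting}: along the length class there is a sequence of super-cells whose inner single-track cells correctly guess the trisections, subdivisions and dominance regions induced by $(\tau,\tau')$, and along this sequence Lemma~\ref{lem:swc-gap} yields the $(1+O(\varepsilon))$ bound even after the re-assignment of rows that lie outside the guessed subdivisions. For efficiency, the number of super-cells is a constant power of the number of single-track cells, which after accounting for all discretizations (of $r$, $r'$, the chunk boundaries, the selections and the dominance parameters) is quasi-polynomial in $n+m$; this establishes the claimed quasi-polynomial running time for a single length class.
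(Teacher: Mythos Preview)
Your decomposition into $\Lambda'_i$ and $\Lambda''_i$ and the outer synchronization via the super-cells of Definition~\ref{def:dp-whole-length-class} match the paper's outline, but there is a genuine gap in the treatment of the single-track rooted sub-instances.

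You invoke Lemma~\ref{lem:second_instance_setting} to handle each rooted piece ``in the style'' of the subinterval-free case. But the rooted pieces arising from a length class are \emph{not} subinterval-free: two rows crossing the same $q_{i,j}$ can have their binary intervals nested. In Lemma~\ref{lem:second_instance_setting} one global row ordering serves both sides, so a single center cell simultaneously defines a trisection for the left SWC-instance and for the right SWC-instance, and the left DP and the right DP can be run independently from this common center. Here the left half requires ordering the rows by start position and the right half by end position; these orderings are unrelated, so a block of the left DP says nothing about which rows are ``above'' or ``below'' in the right DP. Consequently you cannot anchor both sides with a polynomial-size center cell and then proceed outward independently. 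This is exactly the obstruction the paper flags in its overview (``the left hand side of a separating column may have a completely different structure than the right hand side''), and it is the reason the paper abandons the polynomial DP here and instead, in Lemma~\ref{lem:half-length-class}, guesses the \emph{entire} geometric sequence of blocks $\overleftarrow{A}_1,\dots,\overleftarrow{A}_k$ and $\overrightarrow{A}_1,\dots,\overrightarrow{A}_k$ (with $k=O(\log_{1/\varepsilon} n)$) simultaneously. That simultaneous guess is the sole source of the quasi-polynomial blow-up; your write-up attributes the quasi-polynomiality to unspecified ``discretizations'', which does not account for it.

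A second, smaller gap is the actual computation of $\sigma_j$ from two tracks. Requiring that the chunk boundaries and selections of $\Lambda'_i$ and $\Lambda''_i$ be ``compatible'' does not by itself produce a single bit $\sigma_j$: the two tracks have disjoint row sets with incomparable densities. The paper resolves this with the generalized weighted majority (Definition~\ref{def:generalized-majority}), which reweights each selected row by the guessed size of $\tau(M)$ in its chunk before voting; without such a rule the combined estimator need not satisfy the concentration bound behind Lemma~\ref{lem:SWC}.
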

To prove it, we consider DP-cells according to Definition~\ref{def:dp-whole-length-class} and combine these cells from two consecutive columns such that they are compatible. 
\begin{proof}
    To combine the PTAS for $\Lambda'_i$ and $\Lambda''_i$, we proceed from left to right.
    For each index $j$ let $\xi'_j$ be the sets of DP cells for $\Lambda'_i$ and for the odd indices $j$ let $\xi''_j$ be the set of cells for $\Lambda''_i$.
    For each column $c$ before $q_{i,1}$, each pair of cells $(Z,Z') \in \xi'_1 \times \xi''_1$ determines two subproblems for which we compute the two separate solutions.
    Let $\mathcal{B}(Z,Z',z)$ be the set of all boxes (sets of rows) for $\sigma$ considered in the sub-cells of $(Z,Z')$ at column $z$ and
    let $\mathcal{B}'(Z,Z',z)$ be the set of all boxes (sets of rows) for $\sigma'$ considered in the sub-cells of $(Z,Z')$ at column $z$. 

    We now extend the DP as follows.
    We compose the solution from left to right, starting with the prefix of $(\sigma,\sigma')$ before $q_{i,1}$ and then, step by step, we fill the intervals between $q_{i,j}$ and $q_{i,j+1}$ for $j \ge 1$.
    The starting interval can be seen as the interval between a dummy-column $q_0$ and $q_1$.
    For each $j$, let us analyze its interval. 
    If $j$ is odd, we simultaneously consider the cells $\xi'_j,\xi'_{j+1},\xi''_j, \xi''_{j+2}$.
    Otherwise, we simultaneously consider the cells $\xi'_j,\xi'_{j+1},\xi''_{j-1}, \xi''_{j+1}$.

    For each column $c$ with index $\ell$ in the interval, in both cases the values of the DP cells reveal all $\SWC$ instances at $c$ that we would have to solve in order to obtain solutions for $\Lambda'_i$ and $\Lambda''_i$ separately.
    Instead of solving these instances separately, we solve them simultaneously.

    Let $\hat{C}_1, \hat{C}'_1$ and $\hat{C}_2,\hat{C}'_2$ be the chunks of the four DP cells at position $j$.
    In order to determine the value $\sigma_j$, we have to combine $\hat{C}_1$ with $\hat{C}_2$ and take care of the different densities of rows.
    To this end, we generalize the function $\textsc{Majority}_j$.
    \begin{definition}{Generalized Majority}
        \label{def:generalized-majority}
        For a single chunk $c$, let $r(c)$ be the number of rows in $c$ that are guessed to be in $\tau(M)$ and $t(c)$ the number of selected rows.
        As in Definition~\ref{def:weighted-majority}, we replace all values zero by $-1$.
        Then, for the given set of chunks $C$ with selection $T$, we compute 
        \[
            \rho := \sum_{c \in C} \sum_{i \in T\colon i \in c} (r(c) \cdot M_{i,j})\,.
        \]
        We set $\sigma_j = 1$ if the outcome is at least zero and $0$ otherwise.    
        The definition is analogous for $\sigma'_j$.
    \end{definition}

    By replacing the majority function by the generalized majority function of Definition~\ref{def:generalized-majority} in the proof of Lemma~\ref{lem:SWC}, we obtain a $(1 + O(\varepsilon))$-approximation
    also if we consider different cells simultaneously.

    Since we consider all cells for the entire interval simultaneously, one of the choices is at least as good as sampling uniformly at random with knowledge of $\tau(M)$ and $\tau'(M)$.
    We therefore obtain a solution for the interval with at most a $(1+O(\varepsilon))$ factor of errors compared to $(\tau,\tau')$

    Finally we have to join the results that we obtain for the intervals.
    Observe that for each pair of cells $(Z''_j,Z''_{j+2}) \in \xi''_j \times \xi''_{j+2}$ there are two consecutive pairs of cells
    $(Z'_j,Z'_{j+1}) \in \xi'_j \times \xi'_{j+1}$ and $(Z'_{j+1}, Z'_{j+2}) \in \xi'_{j+1} \times \xi'_{j+2}$.
    For a quadruple of cells $(Z'_j,Z''_j,Z'_{j+1},Z''_{j+2})$ we consider each quadruple on the left hand side ending with the matching cells $Z'_j,Z''_j$.
    Among these, we take the one with fewest errors.
    To obtain the value of the new quadruple, we add the errors in the interval $(q_{i,j} q_{i,j+1}]$ to the value of the selected predecessor quadruple.
    To compute the value $(Z'_{j+1},Z''_j,Z'_{j+2},Z''_{j+2})$, we consider all cells $(Z'_j,Z''_j,Z'_{j+1},Z''_{j+2})$, \ie, the cells that have the same $Z''_{j},Z''_{j+1},Z'_{j+1}$ for all choices of $Z'_j$.
    We add the errors between $q_{i,j+1}$ and $q_{i,j+2}$ to the smallest value found among the predecessors.

    The approximation ratio follows from Lemma~\ref{lem:half-length-class} and the quasi-polynomial running time from the fact that we only consider constantly many super-cells simultaneously.

\end{proof}

\begin{lemma}\label{lem:half-length-class}
    There is a QPTAS for \GMEC if all strings are in the same class $\Lambda'_i$ or $\Lambda''_i$.
\end{lemma}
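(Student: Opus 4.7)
The plan is to reduce each of $\Lambda'_i$ and $\Lambda''_i$ to a collection of rooted sub-instances, process each sub-instance with the center-cell machinery of Lemma~\ref{lem:second_instance_setting}, and combine neighbouring sub-instances by an outer left-to-right DP whose super-cells pair the two sub-instances that are simultaneously active in any overlap interval.

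Concretely, for $\Lambda'_i$ I would partition the rows by the unique column of $q_i$ that each row crosses, obtaining a rooted sub-instance $R_j$ at each root $q_{i,j}$. For $\Lambda''_i$ I would keep only the odd-indexed columns $q_{i,1}, q_{i,3}, \ldots$ as roots; by the length bound used in Lemma~\ref{lem:half}, each row of $\Lambda''_i$ crosses exactly one such column, so the same partition into rooted sub-instances $R_j$ applies. Within each $R_j$, I apply the center-cell construction of Lemma~\ref{lem:second_instance_setting}: at the root $q_{i,j}$ I define DP cells that store trisection, chunk, and selection information for the rows on both sides of the root, and I run Algorithm $\textsc{SWC}^{\sigma,\sigma'}$ outward from the center in both directions. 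Since each side of the root forms an SWC-instance in the sense of Section~\ref{sec:swc}, the single-sub-instance guarantee follows directly from Theorem~\ref{thm:column1} together with Lemma~\ref{lem:swc-gap}.

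The main obstacle is the overlap between consecutive sub-instances: rows of $R_j$ may extend into the interval $(q_{i,j}, q_{i,j+1})$, and rows of $R_{j+1}$ may begin before $q_{i,j+1}$, so the two sub-instances are simultaneously active there. By Lemma~\ref{lem:half} at most two sub-instances are ever active at the same column within a length class. I would address this with an outer DP whose super-cells are pairs $(Z_j, Z_{j+1})$ with $Z_j$ a cell of $R_j$ and $Z_{j+1}$ a cell of $R_{j+1}$; moving from $(Z_j, Z_{j+1})$ to $(Z_{j+1}, Z_{j+2})$ requires matching the shared cell $Z_{j+1}$ in the spirit of Definition~\ref{def:predecessor-two}. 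At each column of an overlap interval, the value of $\sigma_j$ (respectively $\sigma'_j$) is determined by the generalized majority of Definition~\ref{def:generalized-majority} applied jointly to the chunks selected by $Z_j$ and $Z_{j+1}$, with the density weights of both sub-instances; outside overlap intervals exactly one sub-instance contributes and the ordinary weighted majority of Definition~\ref{def:weighted-majority} applies.

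The approximation guarantee then follows by combining Lemma~\ref{lem:SWC} with Lemma~\ref{lem:swc-gap} inside each sub-instance: the outer DP enumerates a choice of trisection, chunks, and selections that matches an optimal solution restricted to each $R_j$, and Lemma~\ref{lem:swc-gap} lets the outer DP reassign rows in the $X$-regions without exceeding a $(1+O(\varepsilon))$ factor. The running time is quasi-polynomial because each super-cell contains a constant number of sub-cells and each sub-cell has a polynomial number of parameter choices. The delicate point I anticipate is verifying that the generalized majority yields the correct bit at a shared column when the two overlapping sub-instances have very different row densities; this is precisely what the density weighting in Definition~\ref{def:generalized-majority} is designed for, but the analysis of Lemma~\ref{lem:SWC} must be re-examined to check that the combined contributions of the two sub-instances still give a $(1+O(\varepsilon))$ bound rather than accumulating error.
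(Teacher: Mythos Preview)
Your reduction of $\Lambda''_i$ to $\Lambda'_i$ by keeping only odd-indexed roots is the same as the paper's, and your outer left-to-right DP over consecutive roots is in the right spirit. The gap is inside each rooted sub-instance $R_j$.

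You invoke the center-cell machinery of Lemma~\ref{lem:second_instance_setting} and say ``each side of the root forms an SWC-instance'', so you plan to run $\textsc{SWC}^{\sigma,\sigma'}$ independently to the left and to the right with only a shared center cell. But Lemma~\ref{lem:second_instance_setting} is stated for \emph{subinterval-free} rooted instances, and a rooted instance coming from a length class is not subinterval-free: a row of length just above $m/2^{i+1}$ can be a proper subinterval of a row of length $m/2^i$. The subinterval-free property is what makes the left ordering (by start position) and the right ordering (by end position) compatible, so that the two SWC-DPs can share a single center block $W$ and otherwise run independently. In a general length class the two orderings disagree; a row may sit in block $\overleftarrow{A}_1$ on the left and in block $\overrightarrow{A}_7$ on the right, and the two independent DPs have no mechanism to keep their $\sigma$/$\sigma'$ assignments of that row consistent. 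Hence Theorem~\ref{thm:column1} and Lemma~\ref{lem:swc-gap} do not compose across the root the way they do in Lemma~\ref{lem:second_instance_setting}.

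The paper's fix is exactly what forces the running time up to quasi-polynomial: instead of running the inner DP, it \emph{guesses the entire sequence} of blocks $\overleftarrow{A}_1,\dotsc,\overleftarrow{A}_k$ and $\overrightarrow{A}_1,\dotsc,\overrightarrow{A}_k$ (and the primed versions) simultaneously, together with all their chunks and selections. Because $|\tau(\overleftarrow{A}_{\ell+1})| = \varepsilon^2 |\tau(\overleftarrow{A}_\ell)|$, there are only $O(\log_{1/\varepsilon} n)$ blocks per side, so a super-cell has $n^{O(\log n)}$ choices. This simultaneous guess guarantees that at least one super-cell selects only rows of $\tau(M)$ for $\sigma$ and only rows of $\tau'(M)$ for $\sigma'$ on both sides at once, restoring consistency. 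Your statement that ``each super-cell contains a constant number of sub-cells and each sub-cell has a polynomial number of parameter choices'' would, if true, give a PTAS rather than a QPTAS; that it overshoots the paper's claim is a signal that the left/right coupling has been underestimated.
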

\begin{proof}
    We first note that by skipping all $q_{i,j}$ with even $j$, the strings in $\Lambda''_i$ cross exactly one column of the set.
    It is therefore sufficient to handle $\Lambda'_i$.

    For each column $q_{i,j}$, we create a set of DP cells (as defined in Definition~\ref{def:dp-cell-two-solution-string}) that stores information about a center region as defined in Definition~\ref{def:center-cells} and about non-domination intervals as defined in Definition~\ref{def:dominance}, exactly as in the proof of Theorem~\ref{thm:ptas}. 
    The next insight is that we can order the rows crossing column $c$ at the left and right side as defined below.
    \subparagraph{Left row ordering}
    We first order the rows with increasing starting positions of strings as in Lemma~\ref{lem:second_instance_setting}.
    At the left side of column $c$ , we obtain a similar instance as in Lemma~\ref{lem:second_instance_setting}.

    \subparagraph{Right row ordering}
    Afterwards we reorder the rows in order to handle the right hand side of column $c$. More precisely, we order the strings in increasing order based on the end of strings $e_i$.
    The obtained structure corresponds to the right hand side of column $c$ is similar to the instance handled in Lemma~\ref{lem:second_instance_setting}.

    Instead of running the DP of Lemma~\ref{lem:second_instance_setting}, we guess the sequence of blocks.
    An optimal solution $(\tau,\tau')$ determines a sequence of blocks $\overleftarrow{A}_1,\overleftarrow{A}_2,\dotsc,\overleftarrow{A}_k$ such that $|\tau(\overleftarrow{A}_{i+1})| =  \varepsilon^2 |\tau(\overleftarrow{A}_{i})|$. 
    Instead of moving from $\overleftarrow{A}_1$ to $\overleftarrow{A}_k$ using a DP, we directly guess the strings for all $k$ sub-matrices \emph{simultaneously}.
    We do the same with the chunks and row selections.
    Additionally, we guess the sequence of sub-matrices $\overleftarrow{A}'_1,\overleftarrow{A}'_2,\dotsc,\overleftarrow{A}'_{k'}$ simultaneously 
    such that $|\tau'(\overleftarrow{A}'_{i+1})| =  \varepsilon^2 |\tau'(\overleftarrow{A}'_{i})|$.
    We obtain a combined DP cell $\overleftarrow{\zeta}$ for $k+k'$ sub-matrices.

    Again we form the sub-matrices $\overrightarrow{A}_1,\overrightarrow{A}_2,\dotsc,\overrightarrow{A}_{k}$ and  $\overrightarrow{A}'_1,\overrightarrow{A}'_2,\dotsc,\overrightarrow{A}'_{k'}$ analogous to the left hand side and guess the selected strings of all matrices simultaneously 
    such that $|\tau(\overrightarrow{A}_{i+1})| =  \varepsilon^2 |\tau(\overrightarrow{A}_{i})|$ and  $|\tau'(\overrightarrow{A}'_{i+1})| =  \varepsilon^2 |\tau'(\overrightarrow{A}'_{i})|$. 
    We obtain a combined DP cell $\overrightarrow{\zeta}$ for all $k+k'$ sub-matrices on the right hand side.
    \begin{definition}[DP cell for sub-class of a length class $\Lambda_i$]
        For each $q_{i,j}$, let $\xi_j$ be the set of super-cells $(\overleftarrow{\zeta}, \overleftarrow{\zeta'}, \overrightarrow{\zeta}, \overrightarrow{\zeta'})$, but with the additional center and non-domination information of Lemma~\ref{lem:second_instance_setting}.
        \label{def:dp-cell-length-class}
    \end{definition}

    For each $q_{i,j}$, let $\xi_j$ be the set of super-cells based on Definition~\ref{def:dp-cell-length-class}.
    We then design a DP that moves from left to right through the columns in $q_i$.
    The DP and its analysis now follow from the proof of Theorem~\ref{thm:ptas}, but we consider the left hand side and right hand side of each cell from $\xi_j$ simultaneously.

    To analyze the running time, we observe that $k$ and $k'$ are at most $O(\log_{1/\varepsilon}(n))$ since for each $i$ we assume that $|\tau(\overleftarrow{A}_{i+1})| =  \varepsilon |\tau(\overleftarrow{A}_{i})|$
    and $|\tau'(\overleftarrow{A}'_{i+1})| =  \varepsilon |\tau'(\overleftarrow{A}'_{i})|$. 
    The number of instances $\overrightarrow{A}_i$ and $\overrightarrow{A}'_i$ are also at most $O(\log_{1/\varepsilon}(n))$ each, for the same reason.

    We thus obtain super-cells that are combined of logarithmically many sub-cells with polynomial complexity.
    We obtain an overall super-cell which is a quadruple $(\overleftarrow{\zeta},\overleftarrow{\zeta}',\overrightarrow{\zeta},\overrightarrow{\zeta}')$, and we have to distinguish $\bigl(n^{O(1)}\bigr)^{4 \log_{1/\varepsilon}(n)} = n^{O(\log n)}$
    different cells, which is quasi-polynomial\footnote{We assume that $n$ and $m$ are polynomially related. This is justified because there are $n \cdot m$ entries of $M$ and therefore measuring in $m$ instead of $n$ would also give a quasi-polynomial complexity.}.

    We now analyze the performance guarantee. 
    For each column $j$, we obtain the values $\sigma_j$ and $\sigma'_j$ in almost the same way as we do in Lemma~\ref{lem:second_instance_setting}, but with the difference that we require consistency with all other rows sampled.
    For an optimal solution $(\tau,\tau')$, it is sufficient to only consider choices of rows such that all rows selected for $\sigma$ are in $\tau(M)$ and all rows selected for $\sigma'$ are in $\tau'(M)$.
    Such a selection of rows ensures consistency. 
    Note that we could apply the proof of Lemma~\ref{lem:second_instance_setting}  from the root to the left hand side and to the right hand side independently,
    if we knew $\tau(M)$ and $\tau'(M)$, just by avoiding wrong assignments.
    The simultaneous selection of all relevant rows ensures that we consider at least one selection of rows that satisfies these strong conditions.
    This solution is a $(1+\varepsilon)$ approximation by the proof of Lemma~\ref{lem:second_instance_setting}, and
    our DP computes a solution of at least the same quality since we consider the overall number of errors with respect to all sampled rows.
\end{proof}

Combining the two sub-classes gives a QPTAS for an entire length class.

\subsection{The general QPTAS.}\label{sec:generalQPTAS}
Finally we combine our insights to an algorithm for general instances by combining different length classes.
(See also Fig.~\ref{fig:gen_instance2}.) 

\begin{figure}[h]
    \begin{center}
        \begin{tikzpicture}[scale=0.7]
            \node at (2.5, .5) {$q_{3,1}$};
            \node at (5, .5) {$q_{2,1}$};
            \node at (7.5, .5) {$q_{3,3}$};
            \node at (10, .5) {$q_{1,1}$};
            \draw[blue,thick] (10,.2) -- (10,-6);
            \draw[blue,thick, dashed] (5,.2) -- (5,-6);
            \draw[blue,thick, loosely dashed] (2.5,.2) -- (2.5,-6);
            \draw[blue,thick, loosely dashed] (7.5,.2) -- (7.5,-6);
            \draw (.2,-4) -- (10,-4);
            \draw (1,-4.2) -- (11,-4.2);
            \draw (3,-4.4) -- (10,-4.4);
            \draw (6,-4.6) -- (13,-4.6);
            \draw (7,-4.8) -- (14,-4.8);
            \draw (7,-5) -- (12,-5);
            \node at (0,-4.5) {$\Lambda_1$};

            \draw (.2,-2.5) -- (5,-2.5);
            \draw (1,-2.6) -- (5.5,-2.6);
            \draw (3,-2.8) -- (7,-2.8);
            \draw (3.5,-3.0) -- (6,-3.0);
            \draw (4,-3.2) -- (8,-3.2);
            \node at (0,-3) {$\Lambda_2$};

            \draw (6,-2.7) -- (10,-2.7);
            \draw (6.5,-2.9) -- (10.5,-2.9);
            \draw (7,-3.4) -- (10.2,-3.4);

            \draw (.2,-0.5) -- (2.5,-0.5);
            \draw (1,-0.7) -- (3,-0.7);
            \draw (1.1,-0.9) -- (2.8,-0.9);
            \draw (1.2,-1.1) -- (2.7,-1.1);
            \draw (1.6,-1.3) -- (3.3,-1.3);

            \draw (2.6,-1.5) -- (5.1,-1.5);
            \draw (3.5,-1.7) -- (5.7,-1.7);
            \draw (3.7,-1.9) -- (5.5,-1.9);
            \draw (2.6,-1.0) -- (5.2,-1.0);
            \draw (5.3,-1.0) -- (7.8,-1.0);
            \node at (0,-1) {$\Lambda_3$};
        \end{tikzpicture}
        \caption{\label{fig:gen_instance2} Different length classes, $\Lambda_1$ with corresponding column $q_{1,1}$, 
            $\Lambda_2$ with corresponding columns $q_{2,1}, q_{2,2} = q_{1,1}$, 
            and $\Lambda_3$ with corresponding columns $q_{3,1}, q_{3,2} = q_{2,1}, q_{3,3}, q_{3,4} = q_{1,1}$.}
    \end{center}
\end{figure}
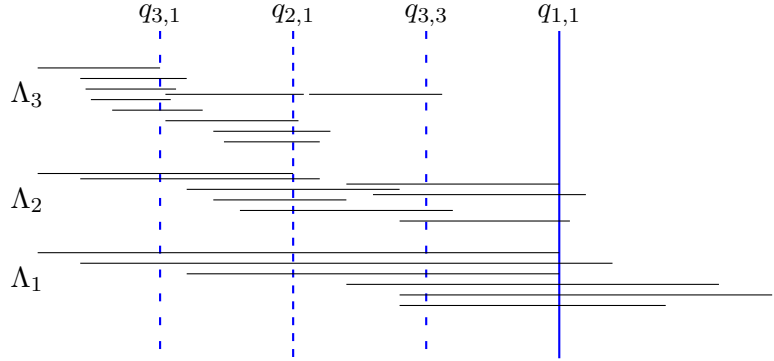

For different length classes $\Lambda_i$, we construct their corresponding columns as explained in the previous section. 
The main idea is that for each column $j$, we only have to consider those quadruple of super-cells according to Definition~\ref{def:dp-whole-length-class} that cross $j$ from all the length classes simultaneously.
We therefore consider at most $O(\log(n))$ quadruples of super-cells simultaneously. In the dynamic program, we consider a joint quadruple of super-cells from all the length classes.
Then the overall complexity of a joint cell is quasi-polynomial: the number of different cells is $\bigl(n^{O(\log n)}\bigr)^{O(\log n)} = n^{O(\log^2 n)}$.

Let $Q_{i,j}$ be the set of quadruples of length class $i$ crossing column $j$ such that the strings are ordered from shortest length class to the longest.
For each length class $i$, a quadruple $q \in Q_{i,j}$ is the set of rows starting at $j$, cross $j$, or end in $j$.
If $j$ is the index of $q_{i,\ell}$,
the quadruple $q$ starts in $j$ if it is formed by cells $(Z'_\ell,Z''_\ell,Z'_{\ell+1},Z''_{\ell+2})$ and ends in $j$ if it is formed by $(Z'_{\ell-1},Z''_{\ell-2},Z'_\ell,Z''_\ell)$ (see Definition~\ref{def:dp-whole-length-class}).
If $j$ lies between $q_{i,\ell}$ and $q_{i,\ell+1}$, $j$ crosses those quadruples that contain $Z'_\ell$ and $Z'_{\ell+1}$.
If non of the cases are true, we do not consider $q$ in the cells for column $j$.

Let us consider a $\log(n)$ vector of quadruples $v$, with one quadruple $Q_{i,j}$ for each $i$ and, consider quadruples starting at, ending at, or crossing column $j$ for length class $i$.
We require that if for some $i$, the quadruple $q \in Q_{i,j}$ ends at $j$, then for all the length classes $\Lambda_k$ with $k>i$ the same condition holds (with index larger than $i$).
This also implies that if for some $i$, the quadruple of length class $i$ starts at $j$, then the same also holds for all quadruples of shorter length classes (with index larger than $i$).
In particular, in order to be able to combine neighboring vectors of quadruples, we do not allow to mix starting and ending quadruples.
Let $\phi$ be the set of all $\log(n)$ vectors of tuples as described above (with one tuple of each length class).
The tuple for each length class is defined as in Lemma~\ref{lem:length-class} and the DP for general instances follows the ideas of Lemma~\ref{lem:length-class}:
We move from left to right column by column. In the initialization step, the joint DP cell is initialized based on Algorithm~\ref{alg:SWC} using $\phi$.
We guess the blocks, chunks and selections from each length class and consider them jointly in a DP cell.

For column $j$, let us consider a vector $v \in \phi$.
We distinguish whether $v$ has starting or ending quadruples. 
(One of the two cases must apply due to the shortest length class.)
For a $v \in \phi$ with starting quadruples,
let $d$ be the smallest number such that there is a quadruple of length class $d$ starting at $j$.
To compute $v$ we consider all $v' \in \phi$ with the following properties.
(a) $v'$ has the same quadruples for all length classes $d' < d$ and 
(b) for $d' \ge d$, the right hand sides of the quadruples of length class $d'$ in $v'$ compatible the left hand sides of the quadruples of $v$.
The super-cells from the left and right hand side are compatible if the intersecting strings from the left and right hand side are assigned to the same types of solution string $\sigma$ or $\sigma'$.

For a $v \in \phi$ with ending quadruples,
let $d$ be the smallest number such that there is a quadruple of length class $d$ ending at $j-1$. (In the very first column of the instance, we do not need this value.)
To compute $v$ we consider all $v' \in \phi$ with the following properties.
(a) $v'$ has the same quadruples for all length classes $d' < d$ and 
(b) for $d' \ge d$, the right hand sides of the quadruples of length class $d'$ in $v'$ match the left hand sides of the quadruples of $v$ in column $j-1$.
Then the value of $v$ is the sum of the minimum value over all such $v'$ and the number of errors in column $j$ obtained by applying $\SWC$ exactly as in the proof of Lemma~\ref{lem:length-class}. 

The approximation ratio follows by arguing that the expected number of errors at each column is at most $(1+O(\varepsilon))$ of OPT
(see Lemma~\ref{lem:length-class}).
This finishes the proof of Theorem~\ref{thm:qptas}.

\section*{Acknowledgment}
We would like to thank Tobias Marschall for helpful discussions.

\newpage
\appendix
\section*{Appendix}

\section{Proof of Lemma~\ref{lem:SWC}}\label{app:SWC}
\begin{proof}
    We show the claim by using a randomized argument. 
    To this end, we assume that for each $i$, the rows from $U_i$ and $L_i$ are selected uniformly at random from $U_i \cap \tau(M)$ and $L_i \cap \tau(M)$ and
    the rows from $U'_i$ and $L'_i$ are selected uniformly at random from $U'_i \cap \tau'(M)$ and $L'_i \cap \tau'(M)$.
    We argue that for each column, the expected number of errors is at most a factor $(1 + O(\varepsilon))$ larger than in an optimal solution.
    Then the claim follows from linearity of expectation and the fact that there is a selection with at most the expected number of errors. 

    We consider the $j$th column of $M$.
    Let $c := \tau(M)_{*,j}$, but without rows that have an entry ``$-$'' in column $j$.
    Let $p := |\{ i : c_i = 0\}|/|c|$ be the fraction of zeros in $c$.
    By swapping the zeros and ones we can assume \WLOG that $p \ge 1-p$, \ie, $p \ge 1/2$.
    Our assumption implies $\tau_j = 0$ and the optimal solution has $(1-p)|c|$ errors within $c$.

    The general idea of the proof is as follows.
    Suppose we would select exactly one row from $\tau(M)$ uniformly at random.
    Then with probability $p$, the algorithm has $(1-p)|c|$ errors in $c$ and with probability $(1-p)$ the number of errors is $p|c|$.
    Therefore the expected number of errors is $(p(1-p) + (1-p)p) |c| = 2 p (1-p) |c|$.
    We obtain the approximation ratio $2 p (1-p) |c| / ((1-p)|c|) = 2 p$.

    We will see that the approximation ratio improves with choosing several rows instead of a single one.
    Additionally, we have to handle the circumstance that we only sample from $U \cup L$ and ignore $X$.

    There is a further issue regarding $U$.
    Let $s$ be the smallest index such that $U_s$ and $c$ intersect, \ie, $U_s$ is the first set with binary entries in column $j$.
    Then rows sampled for $U_s$ may be located outside of $c$ at positions with wildcards in column $j$.
    We avoid the complications caused by the wildcards by only considering classes $U_{i}$ for $i > s$.

    To summarize, $c$ has at least $\varepsilon r$ selected entries and we ignore at most $2 \varepsilon^2 r$ of these due to $X$ and $U_s$.
    For each $i > s$, we sample $1/\varepsilon^3$ rows from $U_i$. 
    Let $c'$ be $c$ restricted to $\bigcup_{i > s} U_{i}$ and let $c''$ be $c$ restricted to  $L$.
    Let $\hat{c}$ be $c$ without $U_s$ and $X$ and let $\bar{c}$ be the part of $c$ in $X \cup U_s$.
    For each $i$, let $c'_i$ be the fraction of zeros of $c'$ in $U_i$ and $c''_i$ the fraction of zeros of $c''$ in $L_i$.

    For each $i \le \ell$, we define $p'_i$ to be the fraction of zeros $c'_i$ and $p''_i$ the fraction of zeros $c''_i$.

    We define a random variables $Y'_{i,k}$ for each $s < i \le 1/\varepsilon^2$ and $Y''_{i,k}$ for each $1 \le i \le 1/\varepsilon^2$. 
    In both cases, $1 \le k \le 1/\varepsilon^3$.
    For each $i,k$, we pick an entry from $c'_{i}$ ($c''_i$) uniformly at random. 
    Then $Y'_{i,k}$ ($Y''_{i,k}$) is the value of the picked entry.
    For all $i,k$, $E[Y'_{i,k}] = 1 - p'_{i}$ and $E[Y''_{i,k}] = 1 - p''_i$. 
    Observe that the $Y_{i,k}$ are independent Poisson trials.
    Let $Y' := \sum_{s<i ,1 \le k \le 1/\varepsilon^3} Y'_{i,k}$ 
    and $Y'':= \sum_{i, 1 \le k \le 1/\varepsilon^3} Y''_{i,k}$.
    We want to use Chernoff bounds to control the probability to take the wrong decision.
    It is sufficient to consider $Y'$ with $s = 1/\varepsilon^2-1$, since in all other cases the probabilities are amplified more.
    Observe that we do not have to consider smaller $s$ because we are given a good SWC-instance and therefore there are no wildcards in $L$ or $L'$.

    Let $\mu' := E[Y']$.
    We analyze the ranges of $\mu'$ separately.
    \subparagraph{Case 1:} Let us assume that $\mu' \in [0,1/(2\euler\varepsilon^3)]$.
    We define $\delta' := 1/(2\mu'\varepsilon^3) - 1$.
    Using a multiplicative Chernoff bound (cf.~\cite{MU05_probability}), we obtain
    \begin{align}
        \Pr(Y' \ge 1/(2\varepsilon^3)) &< \Bigl(\frac{\euler^{\delta'}}{(1+\delta')^{(1+\delta')}}\Bigr)^{\mu'} = \Bigl(\frac{1}{1 + \delta'}\Bigr)^{\mu'} \Bigl(\frac{\euler}{{1 + \delta'}}\Bigr)^{\mu'\delta'}\\
        &= (2 \mu' \varepsilon^3)^{\mu'}(\euler \cdot 2\mu'\varepsilon^3)^{(1/(2\varepsilon^3)  - \mu')}\label{eq:rhs}
    \end{align}
    Note that both terms of \eqref{eq:rhs} are numbers between zero and one.
    If $\mu' < 1/\varepsilon$, the right term is smaller than $\varepsilon^4 \mu'$. 
    Otherwise the left term is smaller than $\varepsilon^4 \mu'$

    The range of $\mu'$ implies that the majority of entries in $\hat{c}'$ is zero.
    Recall that $\hat{c}'$ has an $\varepsilon^3 \mu'$ fraction of zeros.
    The expected number of errors done by the algorithm is therefore at most
    $(1 - \varepsilon^4 \cdot \mu') \cdot (\varepsilon^3 \mu') + \varepsilon^4 \cdot \mu' \cdot (1 - \varepsilon^3 \mu') = (1+\varepsilon) \varepsilon^3 \mu'$.

    \subparagraph{Case 2:} Let us assume that $\mu' \in (1/(2\euler\varepsilon^3), 1/(2\varepsilon^3) - 1/\varepsilon^2]$.
    We use Hoeffding's inequality \cite{Hoe63_probability} to analyze the range.
    To this end, we scale $Y'$ and obtain $\bar{Y}':= \varepsilon^3 Y'$, which has values between zero and one.
    Then 
    \[
        \Pr(\bar{Y}' - E[\bar{Y}'] \ge \varepsilon) \le \euler^{-2\varepsilon^2/\varepsilon^3} = \euler^{-2/\varepsilon}\,.
    \]

    Since for sufficiently small $\varepsilon$, $\euler^{-2/\varepsilon} < \varepsilon/(2\euler) \le \varepsilon^4 \mu'$,
    again we obtain a $(1+\varepsilon)$-approximation in expectation.

    All other ranges now follow immediately:
    For $\mu' \in (1/(2\varepsilon^3) - 1/\varepsilon^2,1/(2\varepsilon^3)]$ every solution is a $(1+O(\varepsilon))$-approximation and for larger $\mu'$ the majority of entries in $\hat{c}'$ is one.
    The analysis is analogous.

    In order to combine $Y'$ and $Y''$, we introduce a bias for $Y'$ such that we count rows $i$ for $s < i \le \ell$ with a factor $(1-\varepsilon)/(\varepsilon - \varepsilon^2)$.
    Then
    \[
        \bar{Y} := \frac{\bar{Y}' \cdot (\ell-s)(1-\varepsilon)/(\varepsilon - \varepsilon^2) + \bar{Y}'' \cdot \ell}{(\ell-s)(1-\varepsilon)/(\varepsilon - \varepsilon^2) + \ell}\,.
    \]
    Then, using the union bound, setting $\sigma_j = 0$ for $\bar{Y} < 1/2$ and $\sigma_j = 1$ otherwise gives an expected $1+O(\varepsilon)$ approximation within $\hat{c}$.
    Errors in $\bar{c}$ are either also errors in an optimal solution, or they contribute at most a factor $O(\varepsilon)$ to the total number of errors.
    Thus overall we obtain an approximation ratio $1+O(\varepsilon)$ within $c$.
    The algorithm $\SWC$ has at most the same approximation ratio, since the only difference is that we do not fix the $Y_{i,k}$ to be zero or one.
    Thus the random process used by the algorithm can only have a lower variance.

    This finishes our analysis for $\tau(M)_{*,j}$.
    For $\tau'(M)_{*,j}$, the proof is analogous.

\end{proof}

We introduced a small but easy to handle imprecision due to the assumption that we can choose exactly the same number of strings from each range.

\section{A simplified DP for a single solution string.}\label{sec:single}
We describe a dynamic program (DP) for a simplified setup with SWC-instances that consists of strings only from one of the two solution strings and the DP computes a single solution string.

\subparagraph{Algorithm ($\textsc{SWC}^{\sigma}$).}
We first globally guess the value $|\tau(M)| =: r$, i.e., we run the algorithm for all possible values and keep the best outcome.
The algorithm works in two phases. 
The first phase is an initialization.

We initialize \emph{each} of cell $\zeta := D(B,C,T)$ with the value computed by $\textsc{SWC}_{\varepsilon^3}$ with the following parameters.
As $U_i$ and $L_i$, we use the chunks $C$.
Since we only consider one solution string, we do not have to fix $r'$ or $U'_i$.
In the execution of $\SWC$, we use the selection $T$ instead of trying all possible selections, i.e., $T$ determines all $\tilde{U}_i$ and $\tilde{L}_i$ in the algorithm.

The value of $\zeta$ is the number of errors in $B$.
The computed solution is $\sigma_\zeta$.
We update the cells the second phase as follows. 
Consider a DP cells $\zeta = D(B,C,T)$ and let $\Pi$ be the set of all possible predecessors of $\zeta$.
Suppose that all cells in $\Pi$ are updated already. 
(This is the case, if we consider cells ordered by increasing value $a$, breaking ties arbitrarily.)

We try all cells $\hat \zeta \in \Pi$ (with all of its parameters marked by $\hat{\cdot}$\,) and consider the block $B$ from column $\en_{\hat{b} +1}$ on, which we call $\tilde{B}$.
We then run $\SWC$ with the parameters and selections from $\zeta$ for $\tilde{B}$.
Let $\mathrm{err}$ be the number of errors of the solution in the rows $a$ to $c-1$ of $\tilde{B}$.
We concatenate the computed solution string to $\sigma_{\hat{\zeta}}$.
The new value of $\zeta$ is $\min\{\zeta, \hat \zeta + \mathrm{err}\}$.
Overall, the value of $\zeta$ is the minimum value over all $\hat \zeta \in \Pi$.

We iterate this procedure until all cells are updated.
It might happen, however, that we were not able to compute the entire solution yet.
The reason is that valid DP cells as specified select a large number of rows, which may not be possible in the end.
In order to finish the DP, we additionally consider special cells that are defined as before, but with $c=n$. 
Intuitively, we use these cells when only at most $1/\varepsilon^4$ rows of $\tau(M)$ are left.
For these cells, our computation considers the optimal solution for the suffix of $\sigma$.

\begin{lemma}\label{lem:simpleDP}
    For SWC-instances $M$ of $\GMEC$ with a restriction that $M$ contains strings from only one of the two solution strings ($\sigma$ or $\sigma'$), the above algorithm is a PTAS.
\end{lemma}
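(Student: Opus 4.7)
The plan is to verify three things: that the algorithm runs in polynomial time, that it produces a feasible solution, and that its output is within a $(1+O(\varepsilon))$ factor of the optimum. The first two are routine bookkeeping: a DP cell $D(B,C,T)$ is determined by $O(1/\varepsilon^2)$ row indices from $\{1,\dots,n\}$ together with the multiset $T$ of $2/\varepsilon^5$ sampled rows, so there are $n^{O(1/\varepsilon^5)}$ cells; updating each cell requires trying polynomially many predecessors, and $\SWC$ with all selections fixed by $T$ is a single column-wise majority computation. Feasibility is immediate because at every update we concatenate a length $|\tilde B|$ solution string to a prefix produced by a predecessor, so the final cell with $c=n$ yields a string of length $m$.

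For the quality analysis, I would construct, for any optimal $\tau$, a distinguished path of DP cells whose computed value is a good approximation, and then argue that the algorithm can only do better. First I would fix $r := |\tau(M)|$ (one of the globally guessed values) and, starting from the top-most row of $\tau(M)$, define a geometric sequence of ranges $Y_0, Y_1, \dots$ where $Y_i$ holds the next $(\varepsilon^{2i} - \varepsilon^{2i+2})\,r$ rows of $\tau(M)$, assigning non-$\tau(M)$ rows so that each $Y_i$ begins with a $\tau(M)$ row. Pair $(Y_i,Y_{i+1})$ then induces a block $B_i$; choosing chunks $C_i$ that equidistribute $\tau(B_i)$ and the best-possible selection $T_i$ gives a cell $\zeta_i$, and by construction $\zeta_{i-1}$ is a predecessor of $\zeta_i$. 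Only $O(\log_{1/\varepsilon} n)$ such $Y_i$ are non-empty, and when fewer than $1/\varepsilon^4$ rows of $\tau(M)$ remain we switch to the special cells with $c = n$, which compute the remaining suffix exactly.

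Next I would apply Lemma~\ref{lem:swc-gap} along this chain. For $\zeta_0$, the lemma says that the prefix produced by $\SWC$ with the parameters of $\zeta_0$ is a $(1+O(\varepsilon))$-approximation on the columns it covers, even after we re-assign the rows in $X_{\zeta_0}$ (which are precisely the rows below $B_0$) to match $\tau$. Inductively, going from $\zeta_i$ to $\zeta_{i+1}$, the compatibility conditions for predecessors guarantee that the prefix associated with $\zeta_i$ is not re-computed, while $\SWC$ on the new columns between $\en_{B_i}$ and $\en_{B_{i+1}}$, with the $U_{i+1},L_{i+1}$ derived from the chunks of $\zeta_{i+1}$, yields a $(1+O(\varepsilon))$ factor on those new columns by Lemma~\ref{lem:swc-gap}, with the re-assignment cost again absorbed into the guarantee. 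Summing column-wise over the log-many levels, the total error of the constructed path is at most $(1+O(\varepsilon))\,\cost_M(\tau,\tau')$. Since our DP takes the minimum over all predecessors at each step, the returned value is no worse than this path.

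The main obstacle I anticipate is the interface between consecutive cells. Lemma~\ref{lem:swc-gap} assumes that the ``tail'' rows $X$ of one trisection may be freely re-assigned to match $\tau$; in the DP these are exactly the rows that will be fixed by later cells $\zeta_{i+1}, \zeta_{i+2}, \dots$. One has to check that re-using the row samples of $\zeta_i$ for its columns (and never revising them) is consistent with the outputs of descendant cells, which is precisely the reason the predecessor relation demands that chunks between $b$ and $c$ match between $\zeta_i$ and $\zeta_{i+1}$ and that the multisets agree on coinciding chunks. Once this bookkeeping is nailed down, the approximation bound follows by summing the column-wise guarantees and absorbing the $O(\log_{1/\varepsilon} n) \cdot \varepsilon^{2}$ overhead from the geometric decay of $|Y_i|$ into the constants hidden by $O(\varepsilon)$, completing the PTAS claim.
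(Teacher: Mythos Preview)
Your proposal is correct and follows essentially the same approach as the paper: polynomial cell count, a geometric sequence of ranges $Y_i$ containing $(\varepsilon^{2i}-\varepsilon^{2i+2})r$ rows of $\tau(M)$, the corresponding chain of blocks/cells, and an induction along this chain invoking Lemma~\ref{lem:swc-gap} on each new column range.

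One small correction: the phrase ``absorbing the $O(\log_{1/\varepsilon} n)\cdot\varepsilon^2$ overhead \dots\ into the constants hidden by $O(\varepsilon)$'' is both unnecessary and, as stated, wrong---that quantity grows with $n$ and cannot be hidden in $O(\varepsilon)$. Fortunately no such overhead exists: each column of $M$ lies in exactly one new segment $\tilde B_i$, and Lemma~\ref{lem:swc-gap} already bounds the error on that segment by $(1+O(\varepsilon))$ times the optimum restricted to those columns, with the re-assignment of the tail rows $X$ accounted for inside the lemma's guarantee. Summing the per-column bounds therefore gives $(1+O(\varepsilon))\cdot\cost_M(\tau)$ directly, with no logarithmic accumulation. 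Drop that clause and your argument matches the paper's.
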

\begin{proof} 
    Since all binary strings are feasible solutions, our algorithm vacuously produces a valid solution. 
    The number of different DP cells is polynomial in the instance size since the number of variables is a constant (depending on $\varepsilon$) and each variable has a polynomial range.
    All computations can be done in polynomial time. Therefore the overall running time of the algorithm is polynomial.

    To analyze the quality of the computed solution, we partition $\tau(M)$ into ranges.
    Starting from the top-most row of $\tau(M)$, for each $i \ge 0$, the $i$th range $Y_i$ contains the next 
    $(\varepsilon^{2i} - \varepsilon^{(2i+2)})r$ rows of $\tau(M)$.
    To be consistent with properties needed in later proofs, we ensure that the first row of each $Y_i$ is contained in $\tau(M)$ and thus
    we add the rows between $Y_i$ and $Y_{i+1}$ to $Y_i$.
    We note that if only a constant number of rows of $\sigma(M)$ are left, we can compute the partial solutions optimally and
    there are DP cells for exactly this purpose:
    there is a DP cell $\zeta_i$ such that the last at most $1/\varepsilon^4$ rows of $\tau(M)$ are located between $a$ and $c$ and $Y_i$ contains exactly these rows.
    To keep a clean notation, in the following we implicitly assume that cells with constantly many rows of $\sigma(M)$ are handled separately.

    The block $B_0$ contains the rows of $Y_0$ and the columns one to the end of the first row of $\tau(Y_0)$. 
    For each $i > 0$, block $B_i$ contains the rows of $Y_i$ and $Y_{i+1}$. It contains the columns after those of $B_{i-1}$ to the end of the first row of $Y_i$.

    According to Definition~\ref{def:dp-cell-one-solution-string}, the remaining parameters for cells $\zeta_i$ lead to at least as good a solution as the following choice.
    The set $C$ is chosen such that each block is the $U$ and $L$ part of an $\varepsilon^2$-trisection and the chunks are the subdivisions of the trisection (Definitions~\ref{def:trisection} and \ref{def:subdivision}).
    The selections $S$ are chosen in the same way as $\SWC$ would choose them. 

    We inductively show that the value of each $\zeta_i$ is at most a factor $(1+\varepsilon)$ larger than the number of errors of an optimal solution restricted to the considered prefix.
    For $i=0$ we only consider $B_0$ and the invariant follows directly from Lemma~\ref{lem:swc-gap}.

    Suppose now that $i \ge 0$ and for all $\tilde{i}<i$ the invariant is true.
    Then we consider $\zeta_i$. Let $\tilde{B}_i$ be the part of $B_i$ after $B_{i-1}$.

    We apply Lemma~\ref{lem:swc-gap} to compute the string $\sigma^{\zeta_i}$.
    We obtain a $(1+\varepsilon)$ for the prefix covered by $\sigma^{\zeta_i}$ for the following reason.
    The part before $\tilde{B}_i$ was fixed, and by our induction hypothesis, independent of the rows considered in $\tilde{B}$ we already have a $(1+\varepsilon)$ approximation.
    The part of $\sigma^{\zeta_i}$ within $\tilde{B}_i$ gives a $(1+\varepsilon)$ approximation by the claim of Lemma~\ref{lem:swc-gap}.

    We continue the induction until the entire string $\sigma$ is determined.
\end{proof}

\end{document}